\newtheorem{Theorem}{Theorem}[section]
\newtheorem{Lemma}{Lemma}[section]
\newtheorem{Proposition}{Proposition}[section]
\newtheorem{Corollary}{Corollary}[section]
\newtheorem{Remark}{Remark}[section]
\theoremstyle{remark}
\newcommand{\be}{\begin{equation}}
\newcommand{\ee}{\end{equation}}
\newcommand{\Ph}{\mathcal P}
\newcommand{\ddd}{\mathrm{d}}
\newcommand{\pd}[2]{\frac{\partial#1}{\partial#2}}
\newcommand{\vpd}[2]{\frac{\delta#1}{\delta#2}}
\newcommand{\weg}[1]{}
\title{Geometry of inhomogeneous Poisson brackets, multicomponent Harry Dym hierarchies and multicomponent Hunter-Saxton equations}
\author{Andrey Yu. Konyaev}
\date{November 2021}
\begin{document}

\maketitle

\begin{abstract}
We introduce a natural class of multicomponent local Poisson structures $\mathcal P_k + \mathcal P_1$, where $\mathcal P_1$ is local Poisson bracket of order one and $\mathcal P_k$ is a homogeneous Poisson bracket of odd order $k$ under assumption that is has Darboux coordinates (Darboux-Poisson bracket) and non-degenerate. For such brackets we obtain general formulas in arbitrary coordinates, find normal forms (related to Frobenius triples) and provide the description of the Casimirs, using purely algebraic procedure. In two-component case we completely classify such brackets up to the point transformation. From the description of Casimirs we derive new Harry Dym (HD) hierarchies and new Hunter-Saxton (HS) equations for arbitrary number of components. In two component case our HS equation differs from the well-known HS2 equation.

{\bf MSC classes:}  37K05, 37K10,  37K25,  53B20,  53B30, 53B50, 35Q53, 35Q35
\end{abstract}

\section{Introduction}

We start with a motivational example. Recall that Harry Dym (HD) equation is a non-linear partial differential equation on a function $u(t, x)$, usually written in compact form
\begin{equation}\label{hd0}
    u_t = \Bigg(\frac{1}{\sqrt{2u}}\Bigg)_{xxx}
\end{equation}
It was discovered by Kruscal in an unpublished work of Harry Dym (see book \cite{kr}) and rediscovered by Sabatier in his series of papers dedicated to the classical strings \cite{sab}. It plays an important role in study of infinite dimensional integrable systems and bares a deep connection with KdV, Camassa-Holm, KP and other integrable hierarchies (see \cite{bhd}, \cite{ped2} for details and further references). 

Harry Dym equation possesses bihamiltonian structure. Consider the space of real or complex-valued $C^{\infty}$ functions on $S^1$. We denote the coordinate on the circle as $x \in [0, 2\pi]$ and the function itself as $u$, omitting the dependence in $x$. The $j-$th derivative of $u$ in $x$ is denoted as $u_{x^j}$.

We are interested in the functionals in the form $\mathcal H = \int^{2 \pi}_{0} \mathcal h \ddd x$, where density $\mathcal h$ is a finite sum of terms
\begin{equation}\label{mon}
    \mathcal h_{j_1 \dots j_s} u_{x^{j_1}} \dots u_{x^{j_s}}, \quad j_1 + \dots + j_s = k.
\end{equation}
Here $\mathcal h_{j_1 \dots j_s}$ are smooth (may be complex-valued) functions of $u$ and, possibly, $x$. The corresponding densities are called homogeneous differential polynomials of degree $k$. 

Consider a pair of Poisson brackets, which act on functionals $\mathcal H_1 = \int_{0}^{2\pi} \mathcal h_1 \ddd x$ and $\mathcal H_2 = \int_{0}^{2\pi} \mathcal h_2 \ddd x$ as
\begin{equation}\label{bra}
\{\mathcal H_1, \mathcal H_2\}_1 = \int \limits_{0}^{2 \pi} \vpd{\mathcal H_1}{u} D^3 \Big( \vpd{\mathcal H_2}{u} \Big) \ddd x, \quad \{\mathcal H_1, \mathcal H_2\}_2 = \int \limits_{0}^{2 \pi} \Big( 2u \vpd{\mathcal H_1}{u} D\Big( \vpd{\mathcal H_2}{u} \Big) + u_x  \vpd{\mathcal H_1}{u} \vpd{\mathcal H_2}{u}\Big) \ddd x.    
\end{equation}
Here $D$ is total derivative in variable $x$ and $\vpd{}{u}$ is variational derivative in field variable $u$. Brackets \eqref{bra} are compatible in a usual sense: their linear combination with arbitrary constant coefficients is again a Poisson bracket. 

The equation \eqref{hd0} is Hamiltonian with respect to both brackets. Moreover, there exists a sequence of functionals $\mathcal H_i = \int_{0}^{2 \pi} \mathcal h_i \ddd x, i \geq 1$, such that the corresponding flows have the form
\begin{equation}\label{hd1}
\begin{aligned}
u_{t_0} & = 2u D \Big( \vpd{\mathcal H_{1}}{u}\Big) + u_x \vpd{\mathcal H_{1}}{u} = 0, \\
u_{t_i} & = 2u D \Big( \vpd{\mathcal H_{i + 1}}{u}\Big) + u_x \vpd{\mathcal H_{i + 1}}{u} = D^3 \Big( \vpd{\mathcal H_i}{u}\Big), \quad i \geq 1. 
\end{aligned}
\end{equation}
The flows \eqref{hd1} pairwise commute and Harry Dym equation \eqref{hd0} is $u_{t_1}$ for the functional $\mathcal H_1 = \int^{2\pi}_{0} \sqrt{2u} \ddd x$. Functionals $\mathcal H_i, i \geq 1$ are common conservation laws for all the equations \eqref{hd1}. Usually the term hierarchy refers to the collection of $u_{t_i}$, but we use it for the collection of $\mathcal H_i$ as well. 

The densities $\mathcal h_i$ for functionals $\mathcal H_i$ can be found by purely algebraic means, without solving any differential equations.
Consider the identity
\begin{equation}\label{ric}
    u = \frac{1}{2} \mathrm v^2 + \mathrm v_x.
\end{equation}
We search the solution in a form of a series
\begin{equation}\label{series}
\mathrm v = \mathcal v_1 + \mathcal v_2 + \mathcal v_3 + \dots,     
\end{equation}
where $\mathrm v_i$ is either zero or a homogeneous differential polynomial of degree $i - 1$. Substituting $\mathrm v$ into \eqref{ric} yields the recursion formulas for the components of the series
$$
\begin{aligned}
    \mathcal v^2_1 & = 2 u, \\
    \mathcal v_{i + 1} & = - \frac{1}{\mathcal v_1} \Bigg(\,\frac{1}{2} \sum \limits_{j = 2}^{i} \mathcal v_j \mathcal v_{i + 2 - j} + (\mathcal v_{i})_x\Bigg), \quad i \geq 1.
\end{aligned}
$$
The series is uniquely defined by the solution of the first equation, that is the choice of either $\mathcal v_1 = \sqrt{2u}$ or $\mathcal v_1 = - \sqrt{2u}$. Taking $\mathcal h_i = \mathcal v_{2i - 1}$ we get a collection of functionals $\mathcal H_i$. This is exactly the Harry Dym hierarchy.

We can try and continue hierarchy for negative $i$. For $i = -1$ and $i = -2$ we get 
\begin{equation}\label{neg}
    \begin{aligned}
    u_{t_0} & = D^3\Big( \vpd{\bar{\mathcal H}_{0}}{u} \Big) = 0, \\
    u_{t_{-1}} & = D^3 \Big( \vpd{\bar{\mathcal H}_{-1}}{u} \Big) = 2u D \Big( \vpd{\bar{\mathcal H}_{0}}{u} \Big) + u_x \vpd{\bar{\mathcal H}_{0}}{u}, \\
    u_{t_{-2}} & = 2u D \Big( \vpd{\bar{\mathcal H}_{-1}}{u} \Big) + u_x \vpd{\bar{\mathcal H}_{-1}}{u}
    \end{aligned}
\end{equation}
We take $\bar{\mathcal H}_0 = \int^{2 \pi}_{0} u \ddd x$. Introducing new variable $u = v_{xx}$ we get $\vpd{\bar{\mathcal H}_{-1}}{u} = v$. After renaming $t_{-2}$ as $t$ the third equation in \eqref{neg} takes form $v_{xxt} = 2 v_{xx} v_{x} + v_{xxx} v$. Integrating it by $x$ and taking the integration constant to be zero, we arrive to the Hunter-Saxton equation
\begin{equation}\label{hs}
v_{xt} = v_{xx} v + \frac{1}{2} v_x^2.     
\end{equation}
This equation originally appeared in the study of nematic liquid crystals.

All the mentioned well-known results admit a natural interpretation in terms of inhomogeneous Poisson brackets. Consider a sum of the brackets \eqref{bra}, which is again a Poisson bracket (actually, it is one of KdV brackets)
\begin{equation}\label{bra2}
\begin{aligned}
 \{\mathcal H_1, \mathcal H_2\} & = \{\mathcal H_1, \mathcal H_2\}_1 + \{\mathcal H_1, \mathcal H_2\}_2 = \\
 & = \int \limits_{0}^{2 \pi} \Bigg(\vpd{\mathcal H_1}{u} D^3 \Big(\vpd{\mathcal H_2}{u}\Big) +  2u \vpd{\mathcal H_1}{u} D \Big(\vpd{\mathcal H_2}{u}\Big) + u_x\vpd{\mathcal H_1}{u} \vpd{\mathcal H_2}{u} \Bigg) \ddd x.  
\end{aligned}
\end{equation}
Using Lemma \ref{lemm5} from section \ref{proof4} one can show that that Poisson bracket \eqref{bra2} has no Casimirs in the category of differential polynomials, except constants. 

At the same time one can extend the Poisson bracket \eqref{bra2} onto the space of the series of differential polynomials in the form \eqref{series}. We treat such series as density, which depends on all derivatives $u_{x^j}$. The corresponding functional is  
$$
\mathrm W =  \int \limits_{0}^{2 \pi} \mathrm w \ddd x = \sum \limits_{i = 1}^{\infty} \int \limits_{0}^{2 \pi} \mathcal w_i \ddd x = \sum \limits_{i = 1}^{\infty} \mathcal W_i.
$$
For Harry Dym hierarchy $\mathcal H_i = \int^{2 \pi}_{0} \mathcal h_i \ddd x$ consider the functional $\mathrm H = \int_{0}^{2 \pi} \mathrm h \ddd x$ for density $\mathrm h = \sum_{i = 1} (-1)^{i + 1} \mathcal h_i$. Then for arbitrary functional $\mathrm W = \int_{0}^{2 \pi} \mathrm w \ddd x$ the conditions \eqref{hd1} yield
$$
\{\mathrm W, \mathrm H\} = 0.
$$
In other words, the Harry Dym hierarchy defines the Casimirs of the inhomogeneous Poisson bracket. This implies, that the geometry of such bracket (the description of bracket and Casimirs) contains all the necessary information, that allows one to write the equations in any given coordinates.

We apply this idea in case of arbitrary $n$. To do that we introduce a natural class of inhomogeneous Poisson brackets, for which the geometry can be completely described. This, in turn, leads to the natural (and new!) generalizations of Harry Dym hierarchies and Hunter-Saxton equations. The brackets itself are described by a purely geometric object: a pair of flat metrics with certain compatibility conditions. The Casimirs and corresponding hierarchies are found by purely algebraic means. In our work we use the language of Dubrovin-Novikov brackets as the most general one. The further results, associated with compatible inhomogeneous brackets of the given type are given in \cite{NijApp3}.

First, we recall some basic definitions and results of the theory of Dubrovin-Novikov brackets (for more detailed exposure on the subject we refer the reader to \cite{doyle}, \cite{mokb}, \cite{gd}, \cite{olver}).

Let $\mathrm M^n$ be the $n-$dimensional ball. Denote $\Omega \mathrm M^n$ to be a space of maps from $S^1 \to \mathrm M^n$. The functionals $\mathcal H: \Omega \mathrm M^n \to \mathbb K$ (for $\mathbb K = \mathbb R$ or $\mathbb C$), we are interested in, are written (same as for $n = 1$) in the form $\mathcal H = \int^{2 \pi}_{0} \mathcal h \ddd x$. Here $\mathcal h$ is a differential polynomial in $n$ field variables.

The local homogeneous Poisson bracket of order $k$ is a bilinear skew-symmetric operation, which acts on functional $\mathcal H_1 = \int_{0}^{2 \pi} \mathcal h_1 \ddd x$ and $\mathcal H_2 = \int_{0}^{2 \pi} \mathcal h_2 \ddd x$ as
\begin{equation}\label{poisson}
    \{\mathcal H_1, \mathcal H_2\} = \int \limits^{2 \pi}_{0} \vpd{\mathcal H_1}{u^{\alpha}} h^{\alpha \beta} D^k \Big(\vpd{\mathcal H_2}{u^{\beta}} \Big) \ddd x + \int \limits^{\infty}_{0} \vpd{\mathcal H_1}{u^{\alpha}} \sum \limits_{j = 1}^k \mathcal P^{\alpha \beta}_{\mathrm{j}} D^{k - j} \Big(\vpd{\mathcal H_2}{u^{\beta}}\Big) \ddd x
\end{equation}
and satisfies the Jacobi identity. The entries of matrix $h^{\alpha \beta}$ are the functions of field variables $u^\alpha$ only. The entries of matrices $\mathcal P^{\alpha \beta}_{\mathrm j}$ are homogeneous differential polynomials of degree $j$. The skew-symmetry condition and Jacobi identity provide differential conditions on components of $\mathcal P^{\alpha \beta}_{\mathrm j}$. 

The symbolic expression
\begin{equation}\label{hamiltonian}
    \mathcal P^{\alpha \beta} = h^{\alpha \beta} D^k + \sum \limits_{j = 1}^k \mathcal P^{\alpha \beta}_{\mathrm{j}} D^{k - j}
\end{equation}
is called a Hamiltonian operator. For Poisson bracket \eqref{poisson} it plays the same role as Poisson tensor plays for finite-dimensional Poisson bracket.

For a given functional $\mathcal H = \int^{2\pi}_0 \mathcal h \ddd x$ the corresponding Hamiltonian flow is defined as 
$$
u^{\alpha}_t = \mathcal P^{\alpha q} \vpd{\mathcal H}{u^q} = h^{\alpha q} D^k \Big(\vpd{\mathcal H}{u^q}\Big) + \sum \limits_{j = 1}^k \mathcal P^{\alpha q}_{\mathrm{j}} D^{k - j} \Big(\vpd{\mathcal H}{u^q}\Big).
$$
For arbitrary test functional $\mathcal H_1 = \int^{2\pi}_0 \mathcal h_1 \ddd x$ it satisfies the obvious relation $\partial_t \mathcal H_1 = \{\mathcal H_1, \mathcal H\}$.

Under the coordinate change $\bar u(u) = (\bar u^1(u^1, \dots, u^n), \dots, \bar u^n(u^1, \dots, u^n))$ the components of the Hamiltonian operator are transformed as (formula 3.2 \cite{doyle}):
\begin{equation*} \label{coordinates}
    \bar h^{\alpha \beta} = h^{pq} \pd{\bar u^{\alpha}}{u^p}\pd{\bar u^{\beta}}{u^q}, \quad \bar{\mathcal P}_{\mathrm j}^{\alpha \beta} = \sum \limits_{i = j}^k \binom{j}{i}
    \mathcal P_{\mathrm i}^{pq} \pd{\bar u^{\alpha}}{u^p} D^{j - i}\Big[ \pd{\bar u^{\beta}}{u^q}\Big].
\end{equation*}
Here $\binom{j}{i}$ are binomial coefficients, $\bar h^{\alpha \beta}$ and $\bar{\mathcal P}_{\mathrm j}^{\alpha \beta}$ stand for the coefficients of Hamiltonian operator in coordinates $\bar u^1, \dots, \bar u^n$.

As it was mentioned earlier, the term $h^{\alpha \beta}$ does not depend on differential coordinates and under the coordinate transformation behaves like a contrvariant tensor of type $(2, 0)$. Due to skew-symmetry condition of the Poisson bracket (formula 3.3, \cite{doyle}) this tensor is symmetric for odd $k$ and skew-symmetric for even $k$. The Hamiltonian operator is called non-degenerate if $\operatorname{det} h \neq 0$.

The analog of \eqref{bra2} is an inhomogeneous Poisson bracket, for which its Hamiltonian operator is a sum of two homogeneous parts $\mathcal A$ and $\mathcal B$ of orders $1$ and $3$ respectively. Due to Theorem 7.1 in \cite{doyle} both $\mathcal A$ and $\mathcal B$ are Hamiltonian operators on their own. Thus, by definition, we have a pair of compatible Poisson brackets.

The local homogeneous Poisson bracket of order $k$ is Darboux-Poisson if there exist coordinates $u^1, \dots, u^n$, such that the Poisson bracket takes form 
$$
\{\mathcal H_1, \mathcal H_2\} = \int_{0}^{\infty} \vpd{\mathcal H_1}{u^\alpha} h^{\alpha \beta} D^k \Big(\vpd{\mathcal H_2}{u^\beta}\Big) \ddd x.
$$
Following \cite{doyle} we call such coordinates Darboux coordinates. The geometry of homogeneous Poisson brackets of order $k = 3$ was investigated by a number of authors (\cite{doyle}, \cite{pot}, \cite{pot2}, \cite{fpv}). In our work we assume that the bracket of order $k > 1$ is Darboux-Poisson.

The non-degenerate homogeneous Poisson brackets of order one are called Poisson brackets of hydrodynamic type. The physicists used this type of brackets for a long time: Landau, for example, used them without any specific name in formulation of his theory of the superfluidity for hellium (formula 1.4 in \cite{landau}). 

The geometric theory of these objects was developed by Dubrovin and Novikov in the 1980s (\cite{dn}, \cite{dn2}, see also \cite{mok1}). They showed, in particular, that these Poisson brackets always have Darboux coordinates and, thus, are in one-to-one correspondence with flat metrics. We will denote the Hamiltonian operator of non-degenerate homogeneous Poisson bracket of order one as $\mathcal A_g$, where $g$ is the corresponding flat contrvariant metric.

The paper is organized as follows: in the next section \ref{main} we formulate the main results and provide important examples. The Theorem \ref{t1} provides the formula for Darboux-Poisson brackets in local coordinates for arbitrary $k$. It generalizes the famous formula by Dobrovin and Novikov. In particular, it implies that for odd $k$ such operator is uniquely defined by a flat metric. 

Theorem \ref{t2} yields normal form for inhomogeneous Poisson brackets with the odd term of order $k \geq 3$ being homogeneous Darboux-Poisson bracket. For $k = 3$ the statement of the Theorem \ref{t2} was independently obtained by P. Lorenzoni and R. Vitolo in their unpublished paper.

For $k = 3$ Theorem \ref{t2} establishes the relation between general geometric approach and well-known algebraic approach, which appeared in original works by Balinskii and Novikov \cite{balnov}. It implies that any such bracket corresponds to a Frobenius triple (see the exact definitions below) --- that is a commutative associative algebra with a pair of non-degenerate invariant forms. This implies, that in general the description of such brackets is a lost cause. Yet, in small dimension it is possible. Theorem \ref{t4} provides the classification of Darboux-Poisson brackets in dimension two. This classification contains the inhomogeneous brackets of many famous integrable systems. 

The last Theorem \ref{t3} provides an algebraic procedure to construct the conservation laws of such system in an implicit form. Similar to the case of dimension one we get an equation, bares similarities to both Balinskii-Novikov formula and Miura formula. This identity provides the recursion relation, which can be solved.

In the end of the section we provide the construction of Hunter-Saxton equation in case of arbitrary inhomogeneous Darboux-Poisson bracket with minor non-degeneracy conditions (formula \eqref{hs2}). We also describe two important examples, for which we write all the formulas explicitly. We compare them with the cases, known in the literature, and show, that they are, indeed, new. 

All the proofs are provided in sections \ref{proof1}, \ref{proof2}, \ref{proof3}, \ref{proof4}. The Appendix contains some important algebraic observations, which (to keep work self-sufficient) are provided with proofs.

\section{Main results of the paper}\label{main}

We start with general result on Darboux-Poisson brackets.

\begin{Theorem}\label{t1}
Consider homogeneous a non-degenerate Darboux-Poisson bracket of order $k$ in arbitrary coordinates $u^1, \dots, u^n$ with Hamiltonian operator
\begin{equation}\label{ty1}
\mathcal P^{\alpha \beta} = h^{\alpha \beta} D^k + \sum \limits_{j = 1}^k \mathcal P^{\alpha \beta}_{\mathrm{j}} D^{k - j}.    
\end{equation}
Recall, that $h^{\alpha \beta}$ is non-degenerate contrvariant tensor of rank two. We have
\begin{enumerate}
    \item The term $P_1^{q \beta}$ has the form 
    \begin{equation}\label{gs1}
         \mathcal P_1^{\alpha \beta} = - k h^{\alpha q}\Gamma^{\beta}_{qs} u^s_x,
    \end{equation}
    where $\Gamma^{\beta}_{qs}$ is flat symmetric connection.
    \item The contrvariant tensor $h^{\alpha \beta}$ is parallel along $\Gamma^{\beta}_{qs}$ and the flat coordinates of the connection are exactly the Darboux coordinates of the Darboux-Poisson bracket \eqref{ty1}.
    \item For $j \geq 1$ define matrices $\tau(j)^{\beta}_q$ using recursion formula
    \begin{equation}\label{gs2} 
    \tau(1)^{\beta}_q = - \Gamma^{\beta}_{qs} u^s_x, \quad \tau(j + 1)^{\beta}_q = \tau(1)^m_{q} \tau(j)^{\beta}_m + D\big(\tau(j)^{\beta}_q\big).
    \end{equation}
    Then the rest of the coefficients of $\mathcal P$ has the form
    \begin{equation}\label{coeff}
        \Ph^{\alpha \beta}_{\mathrm j} = \binom{j}{k} h^{\alpha q} \tau(j)^{\beta}_q, \quad j \geq 2.
    \end{equation}
\end{enumerate}
The inverse is also true: for a given collection of data 
\begin{enumerate}
    \item non-degenerate symmetric (skew-symmetric) contrvariant tensor $h^{\alpha \beta}$ of type $(2, 0)$;
    \item flat symmetric connection $\Gamma^\beta_{qs}$ with condition $\nabla h^{\alpha \beta} = 0$;
    \item odd (even) natural number $k$; 
\end{enumerate}
the formulas \eqref{gs1}, \eqref{coeff}, \eqref{gs2} and \eqref{ty1} define homogeneous non-degenerate Hamiltonian operator of Darboux-Poisson bracket of order $k$ in given coordinates.
\end{Theorem}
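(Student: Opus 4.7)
The plan is to reduce the entire statement to a direct calculation in Darboux coordinates, then transport it back to arbitrary coordinates via the transformation law given in the paper. Let $y^1,\dots,y^n$ be Darboux coordinates for the bracket, in which the Hamiltonian operator is just $H^{ab}D^k$. A preliminary observation, obtained by computing the bracket on two linear functionals in $y$ and using skew-symmetry/Jacobi in the simple form left by the Darboux normal form, is that $H^{ab}$ must be constant in $y$-coordinates; I would dispatch this first so that all subsequent derivatives of $H$ vanish.

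Next I would apply the transformation formula from the preliminaries to the pair $(u,y)$. Because in Darboux coordinates only the term $\mathcal P_{\mathrm 0}^{ab} = H^{ab}D^k$ is present, everything reduces to expanding $H^{ab} D^k \circ \pd{u^\beta}{y^b}$ by the Leibniz rule. This yields
\begin{equation*}
\mathcal P^{\alpha\beta} \;=\; \sum_{j=0}^{k} \binom{k}{j}\, \pd{u^\alpha}{y^a} H^{ab} D^j\!\Big(\pd{u^\beta}{y^b}\Big)\, D^{k-j}.
\end{equation*}
The tensorial identity $h^{\alpha q}=\pd{u^\alpha}{y^a} H^{ab} \pd{u^q}{y^b}$ together with non-degeneracy gives the key rewriting $\pd{u^\alpha}{y^a}H^{ab} = h^{\alpha q}\pd{y^b}{u^q}$, so the coefficient of $D^{k-j}$ becomes exactly $\binom{k}{j}\,h^{\alpha q}\,\tau(j)^\beta_q$ with $\tau(j)^\beta_q := \pd{y^b}{u^q} D^j\!\big(\pd{u^\beta}{y^b}\big)$. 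That matches formula \eqref{coeff}.

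It then remains to identify this $\tau$ with the geometric object in the statement. Since $\Gamma$ is by construction the flat connection whose flat coordinates are $y$, a short manipulation using the chain rule identity obtained by differentiating $\pd{y^a}{u^m}\pd{u^m}{y^b}=\delta^a_b$ gives $\Gamma^\beta_{qs} = -\pd{y^a}{u^q}\pd{y^c}{u^s}\tfrac{\partial^2 u^\beta}{\partial y^a\partial y^c}$, whence $\tau(1)^\beta_q = -\Gamma^\beta_{qs}u^s_x$, i.e.\ \eqref{gs1}. The recursion \eqref{gs2} then drops out by differentiating $\tau(j)^\beta_q$ with respect to $x$ and substituting the same identity to convert the factor $D\!\big(\pd{y^b}{u^q}\big)$ into $\tau(1)^m_q\pd{y^b}{u^m}$. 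Flatness of $\Gamma$ and $\nabla h = 0$ hold trivially in $y$-coordinates (where $\Gamma=0$ and $H$ is constant), and being tensorial conditions they persist in any coordinates.

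For the converse, I would invert the above: given the geometric data, flatness of $\Gamma$ produces global flat coordinates $y$ in which $\nabla h = 0$ forces $h$ to be constant, so the formulas \eqref{gs1}-\eqref{coeff} collapse to $\mathcal P^{\alpha\beta} = H^{ab} D^k$, manifestly Hamiltonian (symmetric/skew-symmetric in $H$ according to the parity of $k$, Jacobi trivial). Since the construction in arbitrary coordinates is obtained from this by the standard covariant transformation, the Jacobi identity is automatic. The only genuinely delicate step I anticipate is the bookkeeping linking $D\!\big(\pd{y^b}{u^q}\big)$ to the Christoffel symbols $\Gamma^m_{qs}$ used in the recursion; the rest of the proof is a structured application of the Leibniz rule and the chain rule.
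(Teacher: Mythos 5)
Your proposal is correct and follows essentially the same route as the paper: both transform the constant-coefficient Darboux form $H^{ab}D^k$ to arbitrary coordinates, expand by the Leibniz rule, and identify $D^j$ of the Jacobi matrix with the $\tau(j)$'s built from the Christoffel symbols of the flat connection whose flat coordinates are the Darboux coordinates (you define $\tau(j)$ in closed form and derive the recursion, the paper defines it by recursion and proves the closed form by induction --- a cosmetic difference). Your preliminary remark that skew-symmetry forces $H$ to be constant in Darboux coordinates is a small point the paper leaves implicit, and is a welcome addition.
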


\begin{Remark}
\rm{
For even $k$ non-degenerate skew-symmetric tensor $h^{\alpha \beta}$ is actually the inverse of symplectic structure and $\Gamma^{\beta}_{rs}$ is flat symplectic connection. The corresponding pair defines the so called structure of flat Fedosov manifold \cite{ferg}, \cite{fed}, which is the main ingredient of Fedosov quantization procedure. To study the analogs of Harry Dym equations in this case is an interesting problem, but it is out of the scope of this paper. $\blacksquare$
}
\end{Remark}

In case of odd $k$ the Christoffel symbols $\Gamma^\beta_{rs}$ define Levi-Civita connection of corresponding flat metric $h$. Thus, all the ingredients of the Darboux-Poisson bracket are defined by single flat metric. So for Poisson bracket of odd order $k > 1$ we adopt the notion $\mathcal B_h$, where $h$ is corresponding flat metric. 

Note, that for $k = 1$, Theorem \ref{t1} yields the classical formulas from \cite{dn} by Dubrovin and Novikov
\begin{equation}\label{first}
    \mathcal A^{\alpha \beta}_g = g^{\alpha \beta} D - g^{\alpha q} \Gamma^{\beta}_{qs} u^s_x,    
    \end{equation}
with only change $h^{\alpha \beta} \to g^{\alpha \beta}$. For $k = 3$ the formula for components of the Hamiltonian operator of Darboux-Poisson bracket $\mathcal B_h$ is
\begin{equation}\label{third}
    \begin{aligned}
    \mathcal B^{\alpha \beta}_h & = h^{\alpha \beta} D^3 - 3 h^{\alpha q} \bar \Gamma^{\beta}_{q s} u^s_xD^2 + \\
    & + 3 \Bigg(h^{\alpha q}\Big(\bar \Gamma^p_{qs} \bar \Gamma^{\beta}_{pr} - \pd{\bar \Gamma^{\beta}_{qs}}{u^r}\Big)u^s_x u^r_x - h^{\alpha q}\bar \Gamma^{\beta}_{qs} u^s_{x^2} \Bigg) D + \\
    & + \Bigg( h^{\alpha q} \Big( 2 \bar \Gamma^a_{qs} \pd{\bar \Gamma^{\beta}_{ar}}{u^p} + \pd{\bar \Gamma^a_{qs}}{u^r}\bar \Gamma^{\beta}_{ap} - \bar \Gamma^a_{qs} \bar \Gamma^b_{ar} \bar \Gamma^{\beta}_{bp} - \frac{\partial^2 \bar \Gamma^{\beta}_{qs}}{\partial u^r \partial u^p}\Big) u^s_x u^r_x u^p_x + \\
    & + h^{\alpha q} \Big( 2 \bar \Gamma^a_{qs} \bar \Gamma_{ar}^{\beta} + \bar \Gamma^a_{qr} \bar \Gamma^{\beta}_{as} - 2 \pd{\bar \Gamma^{\beta}_{qr}}{u^s} - \pd{\bar \Gamma^{\beta}_{qs}}{u^r}\Big) u^s_x u^r_{x^2} - h^{\alpha q} \bar \Gamma^{\beta}_{qs} u^s_{x^3} \Bigg).
    \end{aligned}
\end{equation}
Here $\bar \Gamma^{\beta}_{qs}$ are Christoffel symbols of Levi-Civita connection of $h$ (we use notation with bar to distinguish formulas \eqref{first} and \eqref{third}). The inhomogeneous Poisson brackets $\mathcal P$
\begin{equation*}
\mathcal P = \mathcal A_g + \mathcal B_h   
\end{equation*}
is, thus, determined by a pair of flat metrics $g$ and $h$. Of course, these metrics are not arbitrary: the compatibility of $\mathcal A_g$ and $\mathcal B_g$ puts strong conditions on them. 

\begin{Theorem}\label{t2}
Let $\mathcal A_g$ be the Hamiltonian operator of homogeneous non-degenerate Poisson bracket of order one and $\mathcal B_h$ be the Hamiltonian operator of homogeneous non-degenerate Darboux-Poisson bracket of odd order $k$. The formula
\begin{equation}\label{sp}
\mathcal P^{\alpha \beta} = \mathcal A^{\alpha \beta}_g + \mathcal B^{\alpha \beta}_h   
\end{equation}
defines the Hamiltonian operator of inhomogeneous Poisson bracket of type $1 + k$ (equivalently, $\mathcal A_g$ and $\mathcal B_h$ are compatible), if and only if
\begin{itemize}
    \item[$k \geq 5$:] There exist coordinates $u^1, \dots, u^n$, in which both metrics $h$ and $g$ are in constant form. The Poisson bracket in these coordinates takes form 
    \begin{equation}\label{km1}
     \{\mathcal H_1, \mathcal H_2\} = \int \limits_{0}^{2 \pi} \Bigg( \vpd{\mathcal H_1}{u^{\alpha}}h^{\alpha \beta} D^k \Big( \vpd{\mathcal H_1}{u^{\beta}}\Big) + \vpd{\mathcal H_1}{u^{\alpha}}g^{\alpha \beta} D \Big( \vpd{\mathcal H_1}{u^{\beta}}\Big)\Bigg) \ddd x.
    \end{equation}
    \item[$k = 3$:] In coordinates $u^1, \dots, u^n$, where $h$ has constant entries $h^{\alpha \beta}$, the metric $g$ has the form
    $$
    g^{\alpha \beta} = 2(b^{\alpha \beta} + a^{\alpha \beta}_s u^s). 
    $$
    The constants $h^{\alpha \beta}, b^{\alpha \beta}, a^{\alpha \beta}_s$ are symmetric in upper indices and satisfy the conditions
     \begin{equation}\label{triple}
    \begin{aligned}
   a^{\alpha \beta}_q b^{q \gamma} = a^{\gamma \beta}_q b^{q \alpha}, \quad a^{\alpha \beta}_q a^{q \gamma}_s = a^{\gamma \beta}_q a^{q \alpha}_s, \quad a^{\alpha \beta}_q h^{q \gamma} = a^{\gamma \beta}_q h^{q \alpha}.
    \end{aligned}
    \end{equation}
    The Poisson bracket in these coordinates takes form
   \begin{equation}\label{km2}
   \begin{aligned}
   \{\mathcal H_1, \mathcal H_2\} = \int \limits_{0}^{2 \pi} \Bigg( \vpd{\mathcal H_1}{u^{\alpha}}h^{\alpha \beta} D^3 \Big( \vpd{\mathcal H_1}{u^{\beta}}\Big) + 2 \vpd{\mathcal H_1}{u^{\alpha}}(b^{\alpha \beta} + a^{\alpha \beta}_s u^s) D \Big( \vpd{\mathcal H_1}{u^{\beta}}\Big) + \vpd{\mathcal H_1}{u^{\alpha}} a^{\alpha \beta}_s u^s_x \vpd{\mathcal H_2}{u^{\beta}} \Bigg) \ddd x.
   \end{aligned}
    \end{equation}
\end{itemize}
\end{Theorem}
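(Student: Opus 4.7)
My plan is to reduce the problem, via Theorem \ref{t1}, to a purely computational check carried out in the Darboux coordinates of $\mathcal{B}_h$. In those coordinates the Christoffel symbols of $h$ vanish identically and the recursion \eqref{gs2} gives $\tau(j)\equiv 0$, so $\mathcal{B}_h$ collapses to the constant-coefficient operator $h^{\alpha\beta}D^k$, while $\mathcal{A}_g$ retains its Dubrovin--Novikov form \eqref{first} with unknown $g^{\alpha\beta}(u)$. Since each of $\mathcal{A}_g$ and $\mathcal{B}_h$ is Hamiltonian on its own, the formula \eqref{sp} defines a Poisson bracket iff the mixed Schouten bracket $[\mathcal{A}_g,\mathcal{B}_h]$ vanishes, and this single identity is what I need to analyse.

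I would first expand this Schouten bracket as a tri-differential operator and group terms by their differential degree in the jet variables $u^s_x, u^s_{x^2},\dots,u^s_{x^{k}}$. This produces a finite list of independent tensorial equations in the unknowns $g^{\alpha\beta}$ and their partial derivatives $\partial_{r_1\cdots r_m}g^{\alpha\beta}$. The equations are organised by the order $m$: the top one $(m=k)$ couples $\partial^k g$ to the constant tensor $h$, the next $(m=k-1)$ couples $\partial^{k-1}g$ to $h$ and to the Christoffel symbols of $g$, and so on down to an algebraic condition on $g$ itself and its first derivative.

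The crux of the proof is recognising two regimes. For $k=3$ the system has essentially three nontrivial levels: vanishing of $\partial^3 g$, vanishing of $\partial^2 g$, and a quadratic identity on the first jet of $g$ which, after writing $g^{\alpha\beta}=2(b^{\alpha\beta}+a^{\alpha\beta}_s u^s)$ with constants, becomes exactly the Balinskii--Novikov/Frobenius-triple relations \eqref{triple}. The hard part is identifying these relations from the Schouten bracket; I would do this by picking test monomials $u^s_{x^a} u^r_{x^b} u^p_{x^c}$ of total differential degree three and reading off the coefficients, recovering the three conditions of \eqref{triple} (respectively: $h$-invariance, $b$-invariance, and associativity of the algebra with structure constants $a^{\alpha\beta}_s$). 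For $k\ge 5$ the same machinery yields strictly more constraints: between the top level $(m=k)$ and the bottom level there are intermediate orders $m=k-1,\dots,2$ which are simply absent for $k=3$, and each of these imposes a linear equation on $\partial g$ whose coefficient, built from $h$, is invertible. The main obstacle is to check that these extra equations really force $\partial g\equiv 0$ rather than merely reducing $g$ to the affine $k=3$ form; I would do this inductively, showing at each intermediate order that the only monomial containing $\partial g$ linearly has an $h$-dependent coefficient that is nondegenerate, so its vanishing gives $\partial g=0$ outright and the whole hierarchy of lower equations becomes vacuous.

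The converse direction is direct substitution: for $k\ge 5$, with both $g$ and $h$ constant in a common coordinate system, $\mathcal{A}_g$ and $\mathcal{B}_h$ reduce to constant-coefficient operators whose Schouten bracket is manifestly zero, giving \eqref{km1}. For $k=3$ one substitutes $g^{\alpha\beta}=2(b^{\alpha\beta}+a^{\alpha\beta}_s u^s)$ into the Schouten bracket and verifies that the relations \eqref{triple} are precisely what is needed to cancel the surviving terms, a computation already embedded in the forward direction and which yields the explicit expression \eqref{km2}.
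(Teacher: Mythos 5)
Your overall strategy coincides with the paper's: fix Darboux coordinates of $\mathcal B_h$ (so $\mathcal B_h=h^{\alpha\beta}D^k$ with constant $h$), reduce compatibility to the vanishing of the mixed bracket, expand, and collect coefficients; the paper implements the expansion by evaluating the cyclic Jacobi identity \eqref{g3} on linear functionals $\int \mathcal a_\alpha u^\alpha\,\ddd x$ with arbitrary $x$-dependent coefficients and integrating by parts, which is equivalent to your tri-differential expansion of the Schouten bracket. The genuine gap is at the crux, the $k\geq 5$ versus $k=3$ dichotomy, where your description of the system is inaccurate and the decisive computation is deferred. Since $\mathcal A_g$ is first order with coefficients depending on $u$ only, the compatibility conditions involve only $g$, $\partial g$ and $\partial^2 g$, packaged (in Darboux coordinates of $h$) into $A^{\alpha\beta\gamma}=\Gamma^{\alpha\beta}_q h^{q\gamma}$ and $T^{\alpha\beta\gamma}$ as in \eqref{g4}; there is no level coupling $\partial^k g$ to $h$, and no hierarchy of independent equations on $\partial^m g$ up to order $k$ (higher jets enter only through total $x$-derivatives of $A$ and $T$). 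Consequently your claimed mechanism for $k\geq 5$ --- each intermediate order gives a linear equation on $\partial g$ with an invertible $h$-built coefficient, hence $\partial g=0$ outright --- is not what the expansion produces: the intermediate coefficients yield, up to $T$-contributions, identities of the form $p\,D^m\big(A^{\alpha\beta\gamma}\big)=q\,D^m\big(A^{\beta\alpha\gamma}\big)$ with multinomial weights, each of which merely compares two index permutations of $A$ and has no invertibility to exploit. To force $A=0$ (hence $\Gamma^\beta_{qs}=0$ and $g$ constant via \eqref{g1} and non-degeneracy) one must either pair the $(p,q)$ and $(q,p)$ identities with $p\neq q$, $p+q=k+1$, $2\leq p,q\leq k-1$ --- such pairs exist only for $k\geq5$ and give $(p^2-q^2)A^{\alpha\beta\gamma}=0$ --- or argue as the paper does: first establish (Lemma \ref{lemm1}) that $T\equiv0$ and $A$ is constant and totally symmetric, and then read off from the coefficient equations \eqref{calc4} the relation $(3-k)A^{\alpha\beta\gamma}=0$, an equation present for every odd $k$ whose coefficient simply vanishes at $k=3$. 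Either route completes the proof, but neither is the argument you sketch, and you yourself flag this step as the main obstacle without carrying it out.

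For $k=3$ your account is loose but recoverable: the outcome of the coefficient analysis is not "vanishing of $\partial^3 g$ and of $\partial^2 g$" as separate levels, but rather $T\equiv 0$ together with constancy and total symmetry of $A^{\alpha\beta\gamma}$, which through \eqref{g1}--\eqref{g2} give the affine form $g^{\alpha\beta}=2(b^{\alpha\beta}+a^{\alpha\beta}_s u^s)$ and exactly the three relations \eqref{triple}. Your converse-by-substitution is acceptable under the theorem's standing assumption that $\mathcal A_g$ is already a Hamiltonian operator; note, however, that the paper additionally proves (Lemma \ref{lemm2}) that the affine Frobenius form automatically makes $g$ flat with $-\Gamma^{\alpha\beta}_s=a^{\alpha\beta}_s$, which is what makes the classification self-contained and is absent from your sketch.
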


\begin{Corollary}\label{cor1}
Consider a pair of flat metrics $g^{\alpha \beta}$ and $h^{\alpha \beta}$ with Christoffel symbols of corresponding metrics being $\Gamma^{\alpha \beta}_s$ and $\bar \Gamma^{\alpha \beta}_s$ respectively and let $S^{\alpha \beta}_s = g^{\alpha q} \big( \Gamma^\beta_{qs} - \bar \Gamma^\beta_{qs}\big)$. Denote $\mathcal A_g$ and $\mathcal B_h$ to be the same Hamiltonian operators as in the statement of Theorem \ref{t2}. The formula \eqref{sp} defines the Hamiltonian operator of inhomogeneous Poisson bracket of type $1 + k$ if and only if 
\begin{itemize}
    \item[$k \geq 5$:] Tensor $S^{\alpha \beta}_s$ vanishes
    \item[$k = 3$:] The tensor $S^{\alpha \beta}_s$ is symmetric in upper indices and satisfies conditions
    \begin{equation}\label{cond}
    \begin{aligned}
        & \bar \nabla_q S^{\alpha \beta}_s = 0, \quad S^{\alpha \beta}_q S^{q \gamma}_s = S^{\gamma \beta}_q S^{q \alpha}_s, \quad S^{\alpha \beta}_q h^{q \gamma}_s = S^{\gamma \beta}_q h^{q \alpha}_s, \quad S^{\alpha \beta}_q b^{q \gamma}_s = S^{\gamma \beta}_q b^{q \alpha}_s.
    \end{aligned}
    \end{equation}
    In flat coordinates of $h^{\alpha \beta}$, the constants $a^{\alpha \beta}_s$ from formula \eqref{km2} are related to tensor $S^{\alpha \beta}_s$ as $a^{\alpha \beta}_s = - S^{\alpha \beta}_s$.
\end{itemize}
\end{Corollary}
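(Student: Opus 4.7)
The plan is to derive the corollary by translating the coordinate-specific characterization of Theorem \ref{t2} into the invariant language of the tensor $S^{\alpha\beta}_s$. The starting observation is that the difference of two torsion-free connections is a tensor, so $S^{\alpha\beta}_s$ is a genuine $(2,1)$-tensor field; every condition stated for $S$ is therefore coordinate-free and may be verified in any chart. I would work in the flat chart of $h$ provided by Theorem \ref{t2}: in it $\bar\Gamma^{\beta}_{qs} \equiv 0$, $h^{\alpha\beta}$ is constant, $\bar\nabla$ reduces to $\partial$, and $S^{\alpha\beta}_s = g^{\alpha q}\Gamma^\beta_{qs}$.

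For $k \geq 5$ Theorem \ref{t2} forces $g^{\alpha\beta}$ to also be constant in these coordinates, which is the same as $\Gamma^\beta_{qs} \equiv 0$, which is the same as $S \equiv 0$; tensoriality promotes the conclusion to arbitrary coordinates. For $k = 3$ the main tool is the identity $g^{\alpha q}\Gamma^\beta_{qs} = -\tfrac12 \partial_s g^{\alpha\beta}$ for the Levi-Civita connection of $g$, which follows from the automatic $\alpha\beta$-symmetry of $g^{\alpha q}\Gamma^\beta_{qs}$ combined with $\nabla g = 0$. Substituting the affine normal form $g^{\alpha\beta} = 2(b^{\alpha\beta} + a^{\alpha\beta}_s u^s)$ yields $S^{\alpha\beta}_s = -a^{\alpha\beta}_s$. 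The symmetry of $a$ in the upper indices transfers to $S$, $\bar\nabla_q S = \partial_q S = 0$ because $a$ is constant, and the three algebraic relations of \eqref{triple}, under the substitution $a = -S$, are exactly the three algebraic relations of \eqref{cond}.

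For the converse I would run the identity in reverse. Assuming $S$ satisfies \eqref{cond}, the condition $\bar\nabla S = 0$ becomes $\partial S = 0$ in the flat chart of $h$, so $S$ is constant; the assumed symmetry of $S$ in the upper indices lets me collapse $\nabla g = 0$ into $\partial_s g^{\alpha\beta} = -2 S^{\alpha\beta}_s$, which forces $g^{\alpha\beta}$ to be affine in $u$ of exactly the form required by Theorem \ref{t2}, with $a^{\alpha\beta}_s = -S^{\alpha\beta}_s$ and $b^{\alpha\beta} = \tfrac12 g^{\alpha\beta}|_{u=0}$; the remaining algebraic identities of \eqref{triple} are inherited from \eqref{cond}, after which Theorem \ref{t2} concludes. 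The most delicate point is not computational but conceptual: in the forward direction the upper-index symmetry of $S$ is automatic because $\bar\Gamma$ vanishes in the working chart, but for the converse it must be listed as a hypothesis, since it is precisely the symmetry needed to convert $\nabla g = 0$ into a single equation for $\partial g$ and thereby recover the affine form of $g$.
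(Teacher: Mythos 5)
Your overall strategy --- check everything in a flat chart of $h$, where $S^{\alpha\beta}_s$ reduces to the contravariant Christoffel symbols $\Gamma^{\alpha\beta}_s=g^{\alpha q}\Gamma^{\beta}_{qs}$ of $g$, and then compare with the normal form of Theorem \ref{t2} --- is the same as the paper's. But the forward direction of your $k=3$ case rests on a false identity: $g^{\alpha q}\Gamma^{\beta}_{qs}=-\tfrac12\partial_s g^{\alpha\beta}$ is \emph{not} valid for a general Levi-Civita connection, because the contravariant Christoffel symbols are not automatically symmetric in the upper indices. Metricity only gives the symmetrized relation $\partial_s g^{\alpha\beta}+\Gamma^{\alpha\beta}_s+\Gamma^{\beta\alpha}_s=0$ (formula \eqref{g1}); the skew part of $\Gamma^{\alpha\beta}_s$ in $\alpha\beta$ is unconstrained and is generically nonzero even for flat metrics --- for the Euclidean metric in polar coordinates one has $\Gamma^{r\theta}_{\theta}=1/r$ while $\Gamma^{\theta r}_{\theta}=-1/r$. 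Your closing remark that in the forward direction the upper-index symmetry of $S$ is ``automatic because $\bar\Gamma$ vanishes in the working chart'' is likewise not right: vanishing of $\bar\Gamma$ only identifies $S^{\alpha\beta}_s$ with $\Gamma^{\alpha\beta}_s$ and says nothing about its symmetry, which in this direction is a nontrivial consequence of compatibility, not of the choice of chart.

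The gap is fixable, and the fix is exactly what the paper uses: given the normal form $g^{\alpha\beta}=2(b^{\alpha\beta}+a^{\alpha\beta}_s u^s)$ with the relations \eqref{triple}, Lemma \ref{lemm2} shows that $\Gamma^{\beta}_{rs}=-g_{rq}a^{q\beta}_s$ is torsion-free and compatible with $g$, hence is its Levi-Civita connection, so that in the Darboux chart $S^{\alpha\beta}_s=\Gamma^{\alpha\beta}_s=-a^{\alpha\beta}_s$; the symmetry of $S$ and the conditions \eqref{cond} then follow from the symmetry of $a$ and from \eqref{triple}, as you say. Note that this identification genuinely uses \eqref{triple}: metricity alone pins down only the symmetric part of $\Gamma^{\alpha\beta}_s$, namely $-a^{\alpha\beta}_s$, and one must still rule out a skew part. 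Your converse direction is fine as written --- there the upper-index symmetry of $S$ is a hypothesis, and it is precisely what lets you collapse \eqref{g1} to $\partial_s g^{\alpha\beta}=-2S^{\alpha\beta}_s$ and recover the affine form of $g$ with $a=-S$, $b=\tfrac12 g|_{u=0}$ --- and your $k\ge 5$ argument is correct.
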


The first statement of Theorem \ref{t2} shows, that in our line of investigation there are no non-trivial higher order analogs (for example, with fifth order derivative in $x$) of Harry Dym equation. The Hamiltonian operator for $k = 3$ in given dimension is defined by a finite collection of constants $a^{\alpha \beta}_s, b^{\alpha \beta}, h^{\alpha \beta}$. Let us explore the algebraic properties of these constants.

First, recall that flat coordinate system for $h$ is defined up to an affine transformation. In case of $k = 3$ the translation by vector $c = (c^1, \dots, c^n)$ changes $b^{\alpha \beta}$ into $\bar b^{\alpha \beta} = b^{\alpha \beta} - a^{\alpha \beta}_s c^s$. 

Let $\mathrm p$ be a coordinate origin of the coordinate system $u^1, \dots, u^n$, in which the inhomogeneou Poisson bracket is in the form \eqref{km2}. Restricting ourselves to the coordinate changes that preserve point $\mathrm p$, we observe that the constants $a^{\alpha \beta}_s$ behave as tensor of type $(2, 1)$, while $h^{\alpha \beta}, b^{\alpha \beta}$ behave as tensors of type $(2, 0)$.

Consider a commutative associative algebra $\mathfrak a, \star$ over $\mathbb R$ or $\mathbb C$. The symmetric bilinear form $h$ is invariant form if 
\begin{equation}\label{i1}
    h(\xi \star \eta, \zeta) = h(\xi, \eta \star \zeta), \quad \forall \xi, \eta, \zeta \in \mathfrak a.
\end{equation}
If an invariant form is non-degenerate, then it is called Frobenius form. A Frobenius triple $(\mathfrak a, b, h)$ is a commutative associative algebra $\mathfrak a$, equipped with two Frobenius forms $b$ and $h$. 

The formulas \eqref{triple} and \eqref{i1} imply that constants $a^{\alpha \beta}_s, b^{\alpha \beta}, h^{\alpha \beta}$ are just components of some Frobenius triple, written in a given basis of $T^* \mathrm M$. Moreover, the coordinate change that preserves coordinate origin $\mathrm p$ does not change the triple. So, in algebraic terms, the Theorem \ref{t2} establishes one-to-one correspondence between inhomogeneous Hamiltonian operators of the form \eqref{sp} around given point $\mathrm p \in \mathrm M$ and Frobenius triples on $T^*_{\mathrm p} \mathrm M^n$.
\begin{Remark}\label{rem3}
\rm{
In case of degenerate $h^{\alpha \beta}$ even if the Poisson bracket is in the form \eqref{km2} the constants $a^{\alpha \beta}_s$ do not necessarily define commutative associative algebra. The following is a Hamiltonian operator of inhomogeneous Poisson bracket (\cite{str}, the case of algebra N4)
$$
\mathcal P^{\alpha \beta} = \left(\begin{array}{cc}
     1 & 0  \\
     0 & 0
\end{array}\right) D^3 + \left(\begin{array}{cc}
     2 u^1 & u^2  \\
     u^2 & 0
\end{array}\right) D + \left(\begin{array}{cc}
     u^1_x & 0  \\
     u^2_x & 0
\end{array}\right).
$$
The corresponding algebra is non-commutative, but still associative (it is, in fact, an associative Novikov algebra). At the same time coordinate change $\bar u^2 = u^1, \bar u^1 = (u^2)^2$ transforms the Hamiltonian operator of the Poisson bracket into the form
$$
\bar{\mathcal P}^{\alpha \beta} = \left(\begin{array}{cc}
     0 & 0  \\
     0 & 1
\end{array}\right) D^3 + \left(\begin{array}{cc}
     0 & 2 \bar u^1  \\
     2 \bar u^1 & 2 \bar u^2
\end{array}\right) D + \left(\begin{array}{cc}
     0 & \bar u^1_x  \\
     \bar u^1_x & \bar u^2_x
\end{array}\right).
$$
Now the algebra, corresponding to $g^{\alpha \beta}$ is commutative and associative. This Hamiltonian operator can be obtained as a limit $h_1 \to 0$ in family $(31)$ (see Theorem \ref{t4}). $\blacksquare$
}
\end{Remark}

Theorem \ref{t2} implies, that for odd $k > 3$ the inhomogeneous Poisson bracket in the form \eqref{sp} are in one-to-one correspondence with a pair of symmetric forms $g, h$. The explicit description all such pairs for given dimension $n$ is cumbersome, but possible (see, for example, \cite{lan}).

In case of $k = 3$ the problem of classification of inhomogeneous Hamiltonian operators in the form \eqref{sp} contains the problem of classification commutative associative algebras with Frobenius forms. For arbitrary signature of the forms this problem seems to have no solution in general. 

Yet for any given dimension the classification is possible. In dimension two up to isomorphism there are five non-trivial associative commutative algebras, which we denote $\mathfrak a_1, \dots, \mathfrak a_5$. The trivial algebra we denote as $\mathfrak a_0$. In basis $\eta^1, \eta^2$ (for the consistency of the notation we assume that the coordinates on algebra have lower indices) the non-zero structure relations for each of them are
\begin{equation}\label{alg}
    \begin{aligned}
    \mathfrak a_1: & \quad \quad  \eta^2 \star \eta^2 = \eta^2, \\
    \mathfrak a_2: & \quad \quad  \eta^2 \star \eta^2 = \eta^1, \\
    \mathfrak a_3: & \quad \quad  \eta^2 \star \eta^2 = \eta^2, \quad \eta^1 \star \eta^2 = \eta^1, \\
    \mathfrak a_4: & \quad \quad  \eta^1 \star \eta^1 = \eta^1, \quad \eta^2 \star \eta^2 = \eta^2, \\
    \mathfrak a_5: & \quad \quad \eta^2 \star \eta^2 = \eta^2, \quad \eta^1 \star \eta^1 = - \eta^2, \quad \eta^1 \star \eta^2 = \eta^1.
    \end{aligned}
\end{equation}
The following Theorem describes the inhomogeneous Hamiltonian operators in dimension two.

\begin{Theorem}\label{t4}
Consider point $\mathrm p \in \mathrm M^2$. Every inhomogeneous Hamiltonian operator $\mathcal P$ of the form \eqref{sp} for odd $k$ can be brought around point $\mathrm p$ to exactly one normal form from the list below. The different values of parameters $h_1, h_2, b_1, b_2$ yield non-equivalent normal forms, the canonical coordinates $u^1, u^2$ are centered at $\mathrm p$:
\begin{itemize}
    \item[$(01)$:] $\left(\begin{array}{cc} h_1 & 0 \\ 0 & h_2\\ \end{array}\right) D^3 + \left(\begin{array}{cc} 1 & 0 \\ 0 & 1\\ \end{array}\right) D$
    \item[$(02)$:] $\left(\begin{array}{cc} h_1 & 0 \\ 0 & h_2\\ \end{array}\right) D^3 + \left(\begin{array}{cc} -1 & 0 \\ 0 & 1\\ \end{array}\right) D$
    \item[$(03)$:] $\left(\begin{array}{cc} h_1 & 0 \\ 0 & h_2\\ \end{array}\right) D^3 + \left(\begin{array}{cc} - 1 & 0 \\ 0 & - 1 \\ \end{array}\right) D$
    \item[$(04)$:] $\left(\begin{array}{cc} 0 & h_1 \\ h_1 & 1\\ \end{array}\right) D^3 + \left(\begin{array}{cc} 0 & 1 \\ 1 & 0\\ \end{array}\right) D$
    \item[$(05)$:] $\left(\begin{array}{cc} - h_2 & h_1 \\ h_1 & h_2\\ \end{array}\right) D^3 + \left(\begin{array}{cc} 0 & 1 \\ 1 & 0\\ \end{array}\right) D$
    \item[$(11)$:] $\left(\begin{array}{cc} h_1 & 0 \\ 0 & h_2\\ \end{array}\right) D^3 + 2 \left(\begin{array}{cc} 1 & 0 \\ 0 & b_2 + u^2 \\ \end{array}\right) D + \left(\begin{array}{cc} 0 & 0 \\ 0 & u^2_x\\ \end{array}\right)$
    \item[$(12)$:] $\left(\begin{array}{cc} h_1 & 0 \\ 0 & h_2\\ \end{array}\right) D^3 + 2 \left(\begin{array}{cc} - 1 & 0 \\ 0 & b_2 + u^2 \\ \end{array}\right) D + \left(\begin{array}{cc} 0 & 0 \\ 0 & u^2_x\\ \end{array}\right)$
    \item[$(21)$:] $\left(\begin{array}{cc} 0 & h_1 \\ h_1 & h_2\\ \end{array}\right) D^3 + 2 \left(\begin{array}{cc} 0 & 1 \\ 1 & u^1\\ \end{array}\right) D + \left(\begin{array}{cc} 0 & 0 \\ 0 & u^1_x\\ \end{array}\right)$
    \item[$(31):$] $\left(\begin{array}{cc} 0 & h_1 \\ h_1 & h_2\\ \end{array}\right) D^3 + 2 \left(\begin{array}{cc} 0 & 1 + u^1 \\ 1 + u^1 & b_2 + u^2\\ \end{array}\right) D + \left(\begin{array}{cc} 0 & u^1_x \\ u^1_x & u^2_x\\ \end{array}\right)$
    \item[$(41)$:]$\left(\begin{array}{cc} h_1 & 0 \\ 0 & h_2\\ \end{array}\right) D^3 + 2 \left(\begin{array}{cc} b_1 + u^1 & 0 \\ 0 & b_2 + u^2\\ \end{array}\right) D + \left(\begin{array}{cc} u^1_x & 0 \\ 0 & u^2_x\\ \end{array}\right)$
    \item[$(51)$:] $\left(\begin{array}{cc} - h_2 & h_1 \\ h_1 & h_2\\ \end{array}\right) D^3 + 2 \left(\begin{array}{cc} - b_2 - u^2 & b_1 + u^1 \\ b_1 + u^1 & b_2 + u^2\\ \end{array}\right) D + \left(\begin{array}{cc} - u^2_x & u^1_x \\ u^1_x & u^2_x\\ \end{array}\right)$
\end{itemize}
\end{Theorem}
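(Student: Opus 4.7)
The plan is to reduce Theorem \ref{t4} to the algebraic classification of Frobenius triples on a two-dimensional vector space and then enumerate those triples. First, by Theorem \ref{t2} any Hamiltonian operator $\mathcal P$ of the form \eqref{sp} with odd $k$ is, in coordinates centered at $\mathrm p$ in which $h$ is constant, determined by the three symmetric constants $h^{\alpha\beta}$, $b^{\alpha\beta}$ and $a^{\alpha\beta}_s$ satisfying \eqref{triple} (for $k=3$), or with $a^{\alpha\beta}_s = 0$ (for $k \geq 5$). As explained after Theorem \ref{t2}, these constants encode a Frobenius triple $(\mathfrak a, b, h)$ on $T^*_{\mathrm p}\mathrm M^2$, and linear coordinate changes preserving $\mathrm p$ act precisely as isomorphisms of Frobenius triples. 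Thus listing normal forms of $\mathcal P$ around $\mathrm p$ is equivalent to listing Frobenius triples on a two-dimensional space up to isomorphism.

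Next I split the problem by the isomorphism class of $\mathfrak a$, using the classification \eqref{alg}, which gives exactly six options $\mathfrak a_0, \dots, \mathfrak a_5$. For each $\mathfrak a_i$ the argument has three standard sub-steps. \textbf{(i)} Solve the invariance identity \eqref{i1} to describe the space of symmetric invariant bilinear forms; this is a linear computation that typically forces several matrix entries of $h$ and $b$ to vanish (for instance in $\mathfrak a_1$ one has $h(\eta^1,\eta^2) = h(\eta^1, \eta^2 \star \eta^2) = h(\eta^1 \star \eta^2, \eta^2) = 0$). \textbf{(ii)} Compute the automorphism group $\operatorname{Aut}(\mathfrak a_i) \subset GL(2)$ by imposing compatibility with the structure constants; for $\mathfrak a_1, \dots, \mathfrak a_5$ this group has dimension at most one, together with a possible discrete sign. \textbf{(iii)} Classify pairs $(h, b)$ of non-degenerate invariant forms up to the congruence action of $\operatorname{Aut}(\mathfrak a_i)$; the continuous parameters that survive become the moduli $h_1, h_2, b_1, b_2$, while the discrete sign invariants explain the coexistence of pairs such as $(11)/(12)$ and $(01)/(02)/(03)$. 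Substituting the representatives into formula \eqref{km2} produces the entries of the list.

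The case $\mathfrak a_0$ is special and carries the bulk of the work: the algebra imposes no constraint, so the full group $GL(2, \mathbb R)$ acts on pairs of non-degenerate symmetric bilinear forms on $\mathbb R^2$, and the orbits are captured by the real Jordan type of the endomorphism $g^{-1}h$. Diagonalisable cases with distinct real eigenvalues, once each diagonal entry of $g$ is rescaled to $\pm 1$, split by sign pattern into the three forms $(01)$, $(02)$, $(03)$; a single real $2 \times 2$ Jordan block gives $(04)$ (indeed, a direct computation shows that for the matrices in form $(04)$ one has $g^{-1} h$ equal to a Jordan block with eigenvalue $h_1$); a pair of complex-conjugate eigenvalues gives $(05)$.

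The main obstacle is precisely this $\mathfrak a_0$ analysis, which amounts to the classical but intricate simultaneous congruence classification of real symmetric pencils in dimension two and requires treating the diagonalisable, non-diagonalisable, and complex-conjugate cases uniformly. The other five algebras present only routine bookkeeping, since their automorphism groups are nearly trivial and \eqref{i1} pins down most entries of $h$ and $b$. Finally, to establish that the eleven families are pairwise inequivalent as claimed, one exhibits discrete invariants separating them: the isomorphism class of $\mathfrak a$, the real Jordan type of $g^{-1}h$ at $\mathrm p$, and the signs of the normalised entries of $h$ and $b$.
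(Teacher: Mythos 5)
Your proposal follows essentially the same route as the paper: reduce via Theorem \ref{t2} to classifying Frobenius triples on $T^*_{\mathrm p}\mathrm M^2$, run through the algebras \eqref{alg} computing the invariant forms and the automorphism groups for $\mathfrak a_1,\dots,\mathfrak a_5$, and handle the trivial algebra $\mathfrak a_0$ by the simultaneous-congruence classification of pairs of symmetric forms (the paper simply cites Lancaster--Rodman here, where you sketch the standard argument via the real Jordan type of $g^{-1}h$). One minor slip: the automorphism group of $\mathfrak a_2$ is two-dimensional (cf. \eqref{sr1}), not of dimension at most one, which is precisely what allows the form $b$ to be fully normalised in family $(21)$; this does not affect the structure of your argument.
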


The families of normal forms in Theorem \ref{t4} contain brackets of many famous two-component integrable systems: Hirota-Satsuma equation (family $41$), Antonowitcz-Fordy dispersive water equation (family $31$), Ito system (family $41$) just to name the few. 

Now let us go back to the arbitrary dimension $n$. The Hamiltonian operators $\mathcal A_g$ and $\mathcal B_h$ are the same as in the statement of the Theorem \ref{t2}. We call the sequence of functionals $\mathcal H_i, i \geq 1$ the multicomponent Harry Dym hierarchy if it satisfies the conditions
\begin{equation}\label{hd2}
\begin{aligned}
 u^\alpha_{t_0} & = \mathcal A^{\alpha q}_g \vpd{\mathcal H_1}{u^q} = 0, \\
 u^\alpha_{t_i} & = \mathcal A^{\alpha q}_g \vpd{\mathcal H_{i + 1}}{u^q} = \mathcal B^{\alpha q}_h \vpd{\mathcal H_i}{u^q}, \quad i \geq 2.
\end{aligned}
\end{equation}
The equation $u^\alpha_{t_1}$ we call multicomponent Harry Dym equation.

We say, that matrix $L$ is a good square root of $R$ if there exists a polynomial $p(t)$ (may be with complex coefficients), such that $p(R) = L$ and $L^2 = M$. If $R$ is non-degenerate, then at least one good root always exists (Lemma \ref{wwr}).

The next theorem shows, that in coordinates, described in Theorem \ref{t2}, the densities for the multicomponent Harry Dym hierarchy \ref{hd2} can be found by purely algebraic means, without any integration.

\begin{Theorem}\label{t3}
Let $a^{\alpha \beta}_s, b^{\alpha \beta}, h^{\alpha \beta}$ be a Frobenius triple, written in a given basis. Define $r^{\beta}_p = \frac{1}{2}  h^{\beta q} b_{qp}$ and $c^\beta_{ps} = \frac{1}{2} b_{pq} a^{q \beta}_s$, where $b_{pq}$ are components of the inverse of $b^{\alpha \beta}$. Denote $l^\beta_s$ to be the components of a good square root of matrix $r^\beta_p$.

Consider a system of algebraic equations
\begin{equation}\label{rq2}
    u^\beta = \mathrm v^{\beta} + \frac{1}{2} c^\beta_{ps} \mathrm v^p \mathrm v^s + l^\beta_s \mathrm v^s_x, \quad \beta = 1, \dots, n.
\end{equation}
The solution to this system is a collection of series $\mathrm v^1, \dots, \mathrm v^n$ in the form $\mathrm v^\beta = \mathcal v^\beta_0 + \mathcal v^\beta_1 + \mathcal v^\beta_2 + \dots$. Here $\mathcal v^\beta_i$ are either zero or a homogeneous differential polynomial of degree $i - 1$. Take
\begin{equation}\label{rq3}
    \mathcal H^\alpha_i = \int \limits^{2 \pi}_{0} v^\alpha_{2i - 1} \ddd x, \quad i = 1, \dots.
\end{equation}
Then the following holds.
\begin{enumerate}
    \item For each $\alpha = 1, \dots, n$ the series $\mathcal H^\alpha_i$ defines a multicomponent Harry Dym hierarchy
    \begin{equation}\label{hd3}
    \begin{aligned}
    u^\beta_{t_0^\alpha} & = \mathcal A^{\beta q}_g \vpd{\mathcal H^\alpha_1}{u^q} = 0, \\
    u^\beta_{t_i^\alpha} & = \mathcal A^{\beta q}_g \vpd{\mathcal H^\alpha_{i + 1}}{u^q} = \mathcal B_h^{\beta q} \vpd{\mathcal H^{\alpha}_{i + 1}}{u^q}, \quad i \geq 2.
    \end{aligned}
    \end{equation}
    for the Hamiltonian operators $\mathcal A^{\beta q}_g = 2 (b^{\beta q} + a^{\beta q}_s u^s) D + a^{\beta q}_s u^s_x$ and $\mathcal B_h^{\beta q} = h^{\beta q} D^3$
    \item For arbitrary $\alpha, \beta$ and $i, j$ the flows $u_{t^\alpha_i}$ and $u_{t^\beta_j}$ commute. The functional $\mathcal H^\alpha_i$ is a conservation law of $u_{t^\beta_j}$
    \item Consider the sequence of functionals $\mathcal H_i, i \geq 1$, which define the multicompomemt Harry Dym hierarchy with respect to the the Hamiltonian operators $\mathcal A^{\beta q}_g = 2 (b^{\beta q} + a^{\beta q}_s u^s) D + a^{\beta q}_s u^s_x$ and $\mathcal B_h^{\beta q} = h^{\beta q} D^3$. Assume in addition, that each $\mathcal H_i$ possesses the following property: it is either zero or a homogeneous differential polynomial of degree $2i - 1$. Then there exists a collection of constants $c_0, c_1, \dots, c_n$, such that
    \begin{equation*}
        \begin{aligned}
        \mathcal H_1 & = c_0 + c_1 \mathcal H^1_1 + \dots + c_n \mathcal H^n_1, \\
        \mathcal H_i &= c_1 \mathcal H^1_i + \dots + c_n \mathcal H^n_i, \quad i > 1.
        \end{aligned}
    \end{equation*}
    In particular, the linear space of the multicomponent Harry Dym hierarchies with the property given above is finite-dimensional and sequences $\mathcal H^\alpha_i, i \geq 1, \alpha = 1, \dots, n$ together with constant functional form a basis in this space.
    \item If in addition $a^{\alpha \beta}_s$ has unity, then one can replace equation \eqref{rq2} with simpler Miura type equation
    \begin{equation}\label{rq2_2}
    u^\beta = \frac{1}{2} c^\beta_{ps} \mathrm v^p \mathrm v^s + l^\beta_s \mathrm v^s_x, \quad \beta = 1, \dots, n.
    \end{equation}
    and Hamiltonian operator of order one in formula \eqref{hd3} with $\mathcal A^{\beta q}_g = 2 a^{\beta q}_s u^s D + a^{\beta q}_s u^s_x$.
\end{enumerate}
\end{Theorem}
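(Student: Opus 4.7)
The overall strategy is to view \eqref{rq2} as a multicomponent analogue of the Riccati substitution $u = \tfrac{1}{2}v^2 + v_x$ from the introduction. The plan is to show that in the space of formal differential-polynomial series, each functional $\int v^\alpha\,\ddd x$ is a Casimir of the inhomogeneous Poisson bracket $\mathcal P = \mathcal A_g + \mathcal B_h$ in normal form \eqref{km2}; splitting this Casimir identity by homogeneous degree then yields exactly the Lenard recursion \eqref{hd3}.

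First I would establish solvability of \eqref{rq2} by induction on differential degree. The degree-zero part reduces to the purely algebraic equation $u^\beta = v^\beta_1 + \tfrac{1}{2}c^\beta_{ps}v^p_1 v^s_1$, which one inverts using the commutative associative structure encoded in $c$; the good square roots of $r$ parametrize the discrete ambiguity of this inversion. At each higher degree $j\ge 2$ one obtains a linear system $(\delta^\beta_s + c^\beta_{ps}v^p_1)\,v^s_j = F^\beta_j(v_1,\dots,v_{j-1}) - l^\beta_s(v^s_{j-1})_x$, whose coefficient matrix is generically invertible thanks to the Frobenius nondegeneracy, yielding a unique $v^\beta_j$.

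The core step is the Casimir identity $\mathcal P^{\beta q}\,\vpd{\bar{\mathcal H}^\alpha}{u^q}=0$ for the formal series density $\bar{\mathcal H}^\alpha = \int v^\alpha\,\ddd x$. I would verify it by differentiating \eqref{rq2} implicitly to express $\vpd{\bar{\mathcal H}^\alpha}{u^q}$ as a formal series in $v$, substituting into $\mathcal A_g + \mathcal B_h$, and using the Frobenius conditions \eqref{triple} together with the defining property $l^2 = r = \tfrac{1}{2}hb^{-1}$ to match the $D^3$ term of $\mathcal B_h$ against the $D$ and multiplicative terms of $\mathcal A_g$. The degree-$(2i+1)$ component of this identity gives the Lenard relation $\mathcal A^{\beta q}_g\,\vpd{\mathcal H^\alpha_{i+1}}{u^q} = \mathcal B^{\beta q}_h\,\vpd{\mathcal H^\alpha_i}{u^q}$. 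The restriction to odd indices in \eqref{rq3} reflects the separately verifiable fact that every $v^\alpha_{2i}$ is a total $x$-derivative and hence integrates to zero; otherwise the even-degree components of the Casimir identity would produce spurious extra relations.

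Parts (2)--(4) then follow with less effort. Part (2), pairwise commutativity and mutual conservation of the $\mathcal H^\alpha_i$, is the standard bihamiltonian corollary of a Lenard chain for the compatible pair $(\mathcal A_g,\mathcal B_h)$ provided by Theorem \ref{t2}. For part (3), uniqueness of the formal series solution to \eqref{rq2} together with Lemma \ref{lemm5} shows that the space of Casimirs of $\mathcal P$ realized by homogeneous differential-polynomial densities of the prescribed degree is exactly $(n+1)$-dimensional, spanned by the $\int v^\alpha\,\ddd x$ and constants; decomposing by degree gives the stated basis. Part (4) follows by the translation $v \to v - \mathbf e$ that absorbs the linear term of \eqref{rq2}, where $\mathbf e$ is the unity of the algebra encoded by $a$. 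The main obstacle will be the Casimir computation of the third paragraph: it requires a careful interplay between the Frobenius associativity identities, the square-root relation $l^2 = r$, and the differential-order bookkeeping of the full operator $\mathcal A_g + \mathcal B_h$.
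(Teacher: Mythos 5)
Your overall route is the paper's own: solve \eqref{rq2} recursively (the degree-zero part being the Balinskii--Novikov equation, inverted by the implicit function theorem), differentiate \eqref{rq2} implicitly to get an identity for the variational derivatives of $\int \mathrm v^\alpha \ddd x$, split it by differential degree to obtain the Lenard chain, observe that the even components are total derivatives, deduce part (2) from the bihamiltonian formalism for the compatible pair, part (3) from uniqueness of the recursion together with Lemma \ref{lemm5}, and part (4) from a shift by the unity. Two points, however, need repair. First, the sign of your central claim: the series solving \eqref{rq2} is \emph{not} a Casimir of $\mathcal A_g+\mathcal B_h$. The identity one actually obtains after implicit differentiation is $\mathcal A_g^{\beta q}\vpd{\mathrm V^\alpha}{u^q}=\mathcal B_h^{\beta q}\vpd{\mathrm V^\alpha}{u^q}$, i.e.\ $\int \mathrm v^\alpha \ddd x$ annihilates $\mathcal A_g-\mathcal B_h$, and it is the alternating-sign series $\sum_i(-1)^{i+1}\mathcal v^\alpha_i$ that is a Casimir of $\mathcal A_g+\mathcal B_h$ (exactly as in the scalar discussion of the introduction). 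Splitting $(\mathcal A_g+\mathcal B_h)\vpd{\mathrm V^\alpha}{u^q}=0$ by degree as you propose would give the Lenard relations with the wrong sign; the fix is trivial, but the claim as stated is false.

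Second, and more substantively: matching the $D^3$-term of $\mathcal B_h$ against $\mathcal A_g$ cannot be carried out with the relation $l^2=r$ alone. In the computation one must move $l$ and $l^{-1}$ past the multiplication operators of the dual algebra, i.e.\ one needs $l^\beta_r c^r_{ps}=c^\beta_{pr}l^r_s=l^r_p c^\beta_{rs}$ and the same identities for $l^{-1}$ (the paper's Lemma \ref{lemm4}). These hold because $r$ itself intertwines the $c$-multiplications (a consequence of the Frobenius conditions via Lemma \ref{lemm3}) and because $l$ is a \emph{polynomial} in $r$; an arbitrary square root of $r$ need not satisfy them. This is precisely where ``goodness'' of the root is used --- not, as your sketch suggests, in parametrizing the ambiguity of inverting the degree-zero equation, where the square root plays no role (the only ambiguity there is the choice of branch of $\mathcal v_1$). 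A smaller omission of the same kind occurs in part (4): the shift of $\mathrm v$ by the unity $f^p=2b^{pq}e_q$ of the dual algebra must be accompanied by the shift $u^\beta\mapsto u^\beta+\tfrac12 f^\beta$ and the identity $\tfrac12 a^{\beta q}_s f^s=b^{\beta q}$, which is what converts $\mathcal A_g$ into $2a^{\beta q}_su^sD+a^{\beta q}_su^s_x$; a translation of $\mathrm v$ alone does not produce the stated simplification.
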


Assume now that in Frobenius triple $a^{\alpha \beta}_s, b^{\alpha \beta}, h^{\alpha \beta}$ the commutative associative algebra has unity $e = (e_1, \dots, e_n)$. For Hamiltonian operators $\mathcal A^{\beta q}_g = 2 a^{\beta q}_s u^s D + a^{\beta q}_s u^s_x$ and $\mathcal B_h^{\beta q} = h^{\beta q} D^3$ (these are the Hamiltonian operators from the fourth statement of the Theorem \ref{t3}) we write negative hierarchy for $i = 0, - 1, - 2$ we get:
\begin{equation}\label{neg2}
\begin{aligned}
    u^\beta_{t_0} & = h^{\beta q} D^3 \Big(\vpd{\bar{\mathcal H}_0}{u^q}\Big) = 0, \\
    u^\beta_{t_{-1}} & = 2 a^{\beta q}_s u^s D \Big( \vpd{\bar{\mathcal H}_0}{u^q}\Big) + a^{\beta q}_s u^s_x \vpd{\bar{\mathcal H}_0}{u^q} = h^{\beta q} D^3 \Big(\vpd{\bar{\mathcal H}_{-1}}{u^q}\Big), \\
    u^\beta_{t_{-2}} & =  2 a^{\beta q}_s u^s D \Big( \vpd{\bar{\mathcal H}_{-1}}{u^q}\Big) + a^{\beta q}_s u^s_x \vpd{\bar{\mathcal H}_{-1}}{u^q}
\end{aligned}
\end{equation}
Consider functional $\bar{\mathcal H}_0 = \int^{2 \pi}_{0} e_\alpha u^\alpha \ddd x$. The equation for $i = -1$ takes form
$$
u^\alpha_{t_{-1}} = u^\alpha_x = h^{\alpha q} D^3 \Big(\vpd{\bar{\mathcal H}_{-1}}{u^q}\Big). 
$$
We introduce new variables $v^\alpha$ as
$$
v^p_{xxx} = \mathcal B^{pq} \vpd{\bar{\mathcal H}_{-1}}{u^q}.
$$
In given coordinates this yields $h^{qp} \vpd{\bar{\mathcal H}_{-1}}{u^q} = v^p$. We get, that $u^\alpha = v^\alpha_{xx}$. The equation \eqref{neg2} for $i = - 2$ takes form 
$$
v^\alpha_{xxt_{-2}} = 2 a^{\alpha q}_s v^s_{xx} h_{qp} v^p_x + a^{\alpha q}_s v^s_{xxx} h_{qp} v^p. 
$$
Denote $\bar c^\alpha_{ps} = h_{pq} a^{\alpha q}_s$. Renaming $t_{-2}$ as $t$, we integrate both sides of the equation, using property $\bar c^\alpha_{ps} = \bar c^\alpha_{sp}$ (Lemma \ref{lemm3}). Taking the integration constants we arrive to the system
\begin{equation}\label{hs2}
    \begin{aligned}
     v^\alpha_{xt} = \frac{1}{2} \bar c^\alpha_{ps} \big( v^p_{xx} v^s + v^p_x v^s_x + v^p v^s_{xx}\big).
    \end{aligned}
\end{equation}
We call this system the multicomponent Hunter-Saxton equation.

By construction we see, that these equations are defined for arbitrary commutative associative algebra. The existence of Frobenius form is necessary and sufficient for this equation to be bihamiltonian with respect to $\mathcal A_g$ and $\mathcal B_g$. The conservation laws for this equation are obtained from Theorem \ref{t3}.

In the end of the paper we consider two examples. 


\subsection{Example I: Multicomponent Harry Dym and Hunter-Saxton equations, associated with algebra $\mathfrak t_n$} Following \cite{str} we define commutative associative algebra with unity $\mathfrak t_n$: in given basis $\eta^1, \dots, \eta^n$ the multiplication is defined as 
\begin{equation}\label{tn}
\eta^i \star \eta^j = \delta^{i + j - 1}_k \eta^k.     
\end{equation}
We have that $\eta^1$ is unity and for $k \geq 2$
$$
\eta^k = \underbrace{\eta^2 \star \dots \star \eta^2}_{\text{$k - 1$ times}}.
$$
All the calculations in Theorem \ref{t3} for algebra $\mathfrak t_n$ can be done explicitly. 

Assume that we have different basis $\bar \eta^i$, such that in dual basis $\eta_i = \frac{1}{2} b_{iq} \bar \eta^q$ the structure constants $c^\beta_{ps}$ are in the form \eqref{tn}. Matrix $U$ is defined as $U \eta = \eta \star u$, where $u = \eta_1 u^1 + \dots + \eta_n u^n$. In given coordinates 
\begin{equation}\label{umat}
U = \left( \begin{array}{cccccc}
     u^1 & 0 & 0 & \dots & 0 & 0  \\
      u^2 & u^1 & 0 & \dots & 0 & 0 \\
      u^3 & u^2 & u^1 & \dots & 0 & 0 \\
      & & & \ddots & & \\
      u^{n - 1} & u^{n - 2} & u^{n - 3} & \dots & u^1 & 0 \\
      u^n & u^{n - 1} & u^{n - 2} & \dots & u^2 & u^1
\end{array}\right).
\end{equation}
The second matrix we need for calculation is
\begin{equation}\label{vmat}
\mathtt V = \left( \begin{array}{cccccc}
     \mathrm v^1 & 0 & 0 & \dots & 0 & 0  \\
      \mathrm v^2 & \mathrm v^1 & 0 & \dots & 0 & 0 \\
      \mathrm v^3 & \mathrm v^2 & \mathrm v^1 & \dots & 0 & 0 \\
      & & & \ddots & & \\
      \mathrm v^{n - 1} & \mathrm v^{n - 2} & \mathrm v^{n - 3} & \dots & \mathrm v^1 & 0 \\
      \mathrm v^n & \mathrm v^{n - 1} & \mathrm v^{n - 2} & \dots & \mathrm v^2 & \mathrm v^1
\end{array}\right).
\end{equation}
The components of this matrix are series $\mathrm v^\alpha = \mathcal v^1 + \mathcal v^2 + \dots$ (see Theorem \ref{t3}). The matrix $\mathtt V$ naturally decomposes into a series $\mathtt V = V_1 + V_2 + \dots$, where entries of $V_i$ are either zero or $\mathcal v^\alpha_i$.

In terms of $U$ and $\mathtt V$ the equation \eqref{rq2_2} can be written as
\begin{equation}\label{alg1}
U = \frac{1}{2} \mathtt V^2 + \mathtt V_x,    
\end{equation}
where the terms of matrix $\mathtt V_x$ are the derivative in $x$ of the corresponding terms of $\mathtt V$. This system looks exactly like the system \eqref{ric} in one-dimensional case with scalars being replaced by matrices. The equation \eqref{alg1} produces a system of algebraic equations on $V_i$
\begin{equation}\label{alg2}
\begin{aligned}
U & = \frac{1}{2} V_1^2, \\
V_{i + 1} & = - V_1^{-1} \Bigg(\,\frac{1}{2} \sum \limits_{j = 2}^{i} V_j V_{i + 2 - j} + (V_{i})_x \Bigg), \quad i \geq 1. 
\end{aligned}
\end{equation}
To solve the first equation in \eqref{alg2} recall, that
\begin{equation}\label{taylor}
(1 + t)^{\alpha} = \sum \limits_{i = 0}^{2 \pi} \binom{\alpha}{i} t^i, \quad \binom{\alpha}{i} = \frac{\alpha (\alpha - 1) \dots (\alpha - i + 1)}{i!}.    
\end{equation}
Using this notation we get
\begin{equation}\label{alg3}
V_1 = \sqrt{2 u^1} \Big(\operatorname{Id} + \frac{1}{u_1} (U - u_1 \operatorname{Id}) \Big)^{\frac{1}{2}} = \sqrt{2 u^1} \Bigg(\sum \limits_{i = 0}^{n - 1} \binom{\frac{1}{2}}{i}\Big( \frac{1}{u^1} U - \operatorname{Id}\Big)^i \Bigg).    
\end{equation}
Here we used that $U - u_1 \operatorname{Id}$ is nilpotent. The system \ref{alg2} takes form
\begin{equation}\label{solv}
\begin{aligned}
V_1 & = \sqrt{2 u^1} \Bigg(\sum \limits_{i = 0}^{n - 1} \binom{\frac{1}{2}}{i}\Big( \frac{1}{u^1} U - \operatorname{Id}\Big)^i \Bigg),  \\
V_{i + 1} & = - \frac{1}{\sqrt{2u^1}} \Bigg(\sum \limits_{i = 0}^{n - 1} \binom{- \frac{1}{2}}{i}\Big( \frac{1}{u^1} U - \operatorname{Id}\Big)^i \Bigg) \Bigg(\,\frac{1}{2} \sum \limits_{j = 2}^{i} V_j V_{i + 2 - j} + (V_{i})_x \Bigg), \quad i \geq 1. 
\end{aligned}
\end{equation}
In case of $n = 2$ the formula \eqref{alg3} produces 
$$
\mathcal H^1_1 = \int^{2 \pi}_{0} \sqrt{2 u^1} \ddd x, \quad \mathcal H^2_1 = \int^{2 \pi}_{0} \frac{u^2}{\sqrt{2 u^1}} \ddd x.
$$
These are the Hamiltonians of the coupled Harry Dym equation, obtained by Antonowicz and Fordy in \cite{fordy2}. For $n = 3$ the formula \eqref{alg3} produces
$$
\mathcal H^1_1 = \int^{2 \pi}_{0} \sqrt{2 u^1} \ddd x, \quad \mathcal H^2_1 = \int^{2 \pi}_{0} \frac{u^2}{\sqrt{2 u^1}} \ddd x, \quad \mathcal H^3_1 = \int^{2 \pi}_{0} \Bigg( \frac{u^3}{(2 u^1)^{\frac{1}{2}}} - \frac{(u^2)^2}{(2 u^1)^{\frac{3}{2}}}\Bigg) \ddd x.
$$
Note that $\mathcal A_g$ in the fourth statement of Theorem \ref{t3} does not depend on $b^{\alpha \beta}$. Thus, we may assume that $2 b^{\alpha \beta} = h^{\alpha \beta}$. In particular, the constants $\bar c^\alpha_{ps}$, that appear in construction of multicomponent Hunter-Saxton equation, coincide with $c^\alpha_{ps}$. We introduce matrix
\begin{equation}\label{vmat2}
V = \left( \begin{array}{cccccc}
     v^1 & 0 & 0 & \dots & 0 & 0  \\
      v^2 & v^1 & 0 & \dots & 0 & 0 \\
      v^3 & v^2 & u^1 & \dots & 0 & 0 \\
      & & & \ddots & & \\
      v^{n - 1} & v^{n - 2} & v^{n - 3} & \dots & v^1 & 0 \\
      v^n & v^{n - 1} & v^{n - 2} & \dots & v^2 & v^1
\end{array}\right).
\end{equation}
Basically matrix \eqref{vmat2} is matrix \eqref{umat} with $u^\alpha$ replaced by $v^\alpha$. By construction of coordinates $v^i$ we have $V_{xx} = U$. The multicomponent Hunter-Saxton equation can be written as
\begin{equation}\label{hsmat}
    V_{xt} = V_{xx} V + \frac{1}{2} V^2_x.
\end{equation}
The system \eqref{hsmat} seems to be new. For $n = 2$ we get
$$
\begin{aligned}
v^1_{xt} & = v^1_{xx} v^1 + \frac{1}{2} (v^1_x)^2, \\
v^2_{xt} & = v^1_{xx} v^2 + v^1_x v^2_x + v^1 v^2_{xx}.
\end{aligned}
$$
One notices the differences between this multicomponent version of Hunter-Saxton equation and the so-called 2-HS equation. The latter was obtained as a short-wave limit of 2-CH equation (see \cite{mult} for formulas and further references on the subject and its applications). For $n = 3$ the multicomponent Hunter-Saxton equation in coordinates $v^1, v^2, v^3$ takes form
$$
\begin{aligned}
\begin{aligned}
v^1_{xt} & = v^1_{xx} v^1 + \frac{1}{2} (v^1_x)^2, \\
v^2_{xt} & = v^1_{xx} v^2  + v^1_x v^2_x + v^2 v^1_{xx}, \\
v^3_{xt} & = v^2_{xx} v^2 + \frac{1}{2} (v^2_x)^2 + v^1 v^3_{xx} + v^1_{x} v^3_x + v^1_{xx} v^3.
\end{aligned}
\end{aligned}
$$


\subsection{Example II: The Hamiltonian operators, multicomonent Harry Dym and Hunter-Saxton equations, associated with certain commutative associative algebra in dimension $4$}

Consider a commutative associative algebra with unity and Frobenius form. Proposition \ref{pp3} from Appendix implies, that every such an algebra is decomposed into the direct sum of irreducible terms. For dimensions two and three the only terms, that appear in such a decomposition are $\mathfrak t_1, \mathfrak t_2, \mathfrak t_3$ and $\mathfrak t_1^{\mathbb C}$ ($\mathfrak t_1^{\mathbb C}$ stands for the real form of complex one-dimensional algebra $\mathfrak t_1$).

The first commutative associative algebra that differs from the aforementioned series of examples, appears in dimension $4$. In given basis $\eta^1, \dots, \eta^4$ define the non-zero structure relations as
$$
\begin{aligned}
\eta^1 \star \eta^i & = \eta^i, \quad i = 1, \dots, 4, \\
\eta^2 \star \eta^2 & = \eta^4, \quad \eta^3 \star \eta^3 = \eta^4.
\end{aligned}
$$
To simplify the calculations, we take $h^1 = 1, h_i = 0, i = 2, 3, 4$ and define the Frobenius forms as $2 b^{\alpha \beta} = h^{\alpha \beta} = a^{\alpha \beta}_s h^s$. 

The Hamiltonian operator of the inhomogeneous Poisson bracket of type $1 + 3$, associated with this Frobenius triple in coordinates, defined by Theorem \ref{t2}, is
\begin{equation*}
    \mathcal P^{\alpha \beta} = \left(\begin{array}{cccc}
              0 & 0 & 0 & 1  \\
              0 & 1 & 0 & 0 \\
              0 & 0 & 1 & 0 \\
              1 & 0 & 0 & 0 \\
        \end{array}
        \right) D^3 +  2\left(\begin{array}{cccc}
             u^1 & u^2 & u^3 & u^4  \\
              u^2 & u^4 & 0 & 0 \\
              u^3 & 0 & u^4 & 0 \\
              u^4 & 0 & 0 & 0 \\
        \end{array}
        \right) D + \left(\begin{array}{cccc}
             u_x^1 & u_x^2 & u_x^3 & u_x^4  \\
              u_x^2 & u_x^4 & 0 & 0 \\
              u_x^3 & 0 & u_x^4 & 0 \\
              u_x^4 & 0 & 0 & 0 \\
        \end{array}
        \right).
\end{equation*}
Lowering the indices $\eta_i = h_{iq} \eta^i$ we obtain a commutative associative structure on tangent space
$$
\begin{aligned}
\eta_4 \star \eta_i & = \eta_i, \quad i = 1, \dots, 4, \\
\eta_2 \star \eta_2 & = \eta_1, \quad \eta_3 \star \eta_3 = \eta_1.
\end{aligned}
$$
We apply Theorem \ref{t3} to find the Casimirs of this bracket. To do that we take the equation \eqref{rq2_2} with $l^\beta_p = \delta^\beta_p$ takes form
\begin{equation}\label{4dim}
\begin{aligned}
    u^1 & = \mathrm v^1 \mathrm v^4 + \frac{1}{2} (\mathrm v^2)^2 + \frac{1}{2} (\mathrm v^3)^2 + \mathrm v^1_x, \\
    u^2 & = \mathrm v^4 \mathrm v^2 + \mathrm v^2_x, \\
    u^3 & = \mathrm v^4 \mathrm v^3 + \mathrm v^3_x, \\
    u^4 & = \frac{1}{2} (\mathrm v^4)^2 + \mathrm v^4_x.
\end{aligned}
\end{equation}
We get the following collection of series of functionals $\mathcal H^\alpha_1$:
$$
\mathcal H^4_1 = \int \limits^{2 \pi}_{0} \sqrt{2 u^4} \ddd x, \quad \mathcal H^3_1 = \int \limits^{2 \pi}_{0} \frac{u^3}{\sqrt{2u^4}} \ddd x, \quad \mathcal H^2_1 = \int \limits^{2 \pi}_{0} \frac{u^2}{\sqrt{2u^4}} \ddd x, \quad \mathcal H^1_1 = \int \limits^{2 \pi}_{0} \Bigg( \frac{u^1}{(2 u^4)^{\frac{1}{2}}} - \frac{1}{2} \frac{(u^2)^2 + (u^3)^2}{(2 u^4)^{\frac{3}{2}}} \Bigg) \ddd x.
$$
For $\alpha = 1, i = 1$ the bihamiltonian flow \eqref{hd3} of Theorem \ref{t3} is written as
\begin{equation}\label{hd4}
\begin{aligned}
    u^1_t & = \Bigg( \frac{3}{2} \frac{(u^2)^2 + (u^3)^2}{(2u^4)^{\frac{5}{2}}} - \frac{u^1}{(2u^4)^{\frac{3}{2}}}\Bigg)_{xxx},  \\
     u^2_t & = - \Bigg( \frac{u^2}{(2 u^4)^{\frac{3}{2}}}\Bigg)_{xxx},  \\
     u^3_t & = - \Bigg( \frac{u^3}{(2 u^4)^{\frac{3}{2}}}\Bigg)_{xxx},  \\
     u^4_t & = \Bigg( \frac{1}{\sqrt{2 u^4}}\Bigg)_{xxx}.
\end{aligned}
\end{equation}
Here we renamed $t^1_1$ as $t$. The equation \eqref{hd4} is the four-component Harry Dym equation. The corresponding 4-component Hunter-Saxton equations is written as
\begin{equation}
\begin{aligned}
v^1_{xt} & = v^1_{xx} v^4 + v^1_x v^4_x + v^1 v^4_{xx} + v^2_{xx} v^2 + \frac{1}{2} (v^2_x)^2 + v^3_{xx} v^3 + \frac{1}{2} (v^3_x)^2, \\
v^2_{xt} & = v^4_{xx} v^2 + v^4_x v^2_x + v^4 v^2_{xx}, \\
v^3_{xt} & = v^4_{xx} v^3 + v^4_x v^3_x + v^4 v^3_{xx}, \\
v^4_{xt} & = v^4_{xx} v^4 + \frac{1}{2} (v^4)^2
\end{aligned}    
\end{equation}
This system looks like two entangled two-component Harry Dym systems from the previous example.


\section{Proof of Theorem \ref{t1}}\label{proof1}

Assume that $\mathcal P^{\alpha \beta}$ is written in Darboux coordinates $u^1, \dots, u^n$. Denote $\Gamma^{\beta}_{rs}$ to be the Christoffel symbols of the flat symmetric connection, for which these coordinates are flat.

Consider another coordinate system $\bar u^1, \dots, \bar u^n$ and denote the Christoffel symbols of the same connection in these coordinates as $\bar \Gamma^{\beta}_{rs}$. We have
\begin{equation*}
\begin{aligned}
    0 = D(\delta^{\beta}_{r}) & = D\Big( \pd{\bar u^{\beta}}{u^{p}} \pd{u^{p}}{\bar u^{r}}\Big) = D\Big( \pd{\bar u^{\beta}}{u^{p}} \Big) \pd{u^{p}}{\bar u^{r}} +  \pd{\bar u^{\beta}}{u^{p}} D\Big(\pd{u^{p}}{\bar u^{r}}\Big) = \\
    & = D\Big( \pd{\bar u^{\beta}}{u^{p}} \Big) \pd{u^{p}}{\bar u^{r}} +  \pd{\bar u^{\beta}}{u^{p}} \frac{\partial^2 u^{p}}{\partial \bar u^{r} \partial \bar u^s} \bar u^s_{x} = D\Big( \pd{\bar u^{\beta}}{u^{p}} \Big) \pd{u^{p}}{\bar u^{r}} + \bar \Gamma^{\beta}_{r s} \bar u^s_{x}.
\end{aligned}
\end{equation*}
Here we have used the standard transformation rule for the Christoffel symbols. After multiplying by Jacobi matrix $\pd{\bar u}{u}$ both sides and rearranging the terms, we get formula
\begin{equation}\label{f3}
    D\Big(\pd{\bar u^{\beta}}{u^{p}} \Big) = - \pd{\bar u^q}{u^{p}} \bar \Gamma^{\beta}_{qs}\bar u^s_{x} = \pd{\bar u^q}{u^{p}} \tau^\beta_q (1). 
\end{equation}
Let us show that the more general formula holds
\begin{equation}\label{f4}
    D^j \Big(\pd{\bar u^{\beta}}{u^{p}} \Big) = \pd{\bar u^q}{u^{p}} \tau^\beta_q (j), \quad j \geq 1.
\end{equation}
We prove it by induction. The basis is provided by formula \eqref{f3}. The induction step is 
\begin{equation*}
    \begin{aligned}
    D^{j + 1} \Big(\pd{\bar u^{\beta}}{u^{p}} \Big) & = D \Big( \pd{\bar u^q}{u^{p}} \tau_{q}^{\beta}(j)\Big) = D \Big( \pd{\bar u^{m}}{u^{q}}\Big) \tau^\beta_m (j) + \pd{\bar u^{q}}{u^{p}} D\big( \tau^\beta_q (j)\big) = \\
    & = \pd{\bar u^q}{u^p} \tau^m_q(1) \tau^\beta_m(j) + \pd{\bar u^{q}}{u^{p}} D\big( \tau^\beta_q (j)\big) = \pd{\bar u^q}{u^p} \tau^\beta_q (j + 1).
    \end{aligned}
\end{equation*}
The formula \eqref{f4} is proved.

Now consider a pair of functionals $\mathcal H_1 = \int_{0}^{2 \pi} \mathcal h_1 \ddd x$ and $\mathcal H_2 = \int_{0}^{2 \pi} \mathcal h_2 \ddd x$. By direct computation we get
\begin{equation}\label{f5}
    \begin{aligned}
    \{\mathcal H_1, \mathcal H_2\} & = \int \limits^{2 \pi}_{0} h^{pq}\vpd{\mathcal h_1}{u^p} D^k \Big(\vpd{\mathcal h_2}{u^q}\Big) \ddd x = \int \limits^{2 \pi}_{0} h^{pq} \pd{\bar u^{\alpha}}{u^p} \vpd{\mathcal h_1}{\bar u^{\alpha}} D^k \Big(\vpd{\mathcal h_2}{\bar u^{\beta}} \pd{\bar u^{\beta}}{u^q}\Big) \ddd x = \\
    & = \int \limits^{2 \pi}_{0} h^{pq} \pd{\bar u^{\alpha}}{u^p} \vpd{\mathcal h_1}{\bar u^{\alpha}} \Bigg(\sum \limits_{j = 0}^k \binom{j}{k} D^j \Big( \pd{\bar u^{\beta}}{u^q}\Big)D^{k - j} \Big(\vpd{\mathcal h_2}{\bar u^{\beta}}\Big)\Bigg) \ddd x = \\
    & = \int \limits^{2 \pi}_{0} \Bigg(\bar h^{\alpha \beta} \vpd{\mathcal h_1}{\bar u^{\alpha}}\vpd{\mathcal h_2}{\bar u^{\beta}} + \sum \limits_{j = 1}^k \binom{j}{k} h^{\alpha q} \tau^{\beta}_q(j) \vpd{\mathcal h_1}{\bar u^{\alpha}} D^{k - j} \Big(\vpd{\mathcal h_2}{\bar u^{\beta}}\Big)\Bigg)\ddd x = \\
    & = \int \limits^{2 \pi}_{0} \vpd{\mathcal H_1}{u^{\alpha}} h^{\alpha \beta} D^k \Big(\vpd{\mathcal H_2}{u^{\beta}} \Big) \ddd x + \int \limits^{\infty}_{0} \vpd{\mathcal H_1}{u^{\alpha}} \sum \limits_{j = 1}^k \mathcal P^{\alpha \beta}_{\mathrm{j}} D^{k - j} \Big(\vpd{\mathcal H_2}{u^{\beta}}\Big) \ddd x
    \end{aligned}
\end{equation}
Here we applied formula \eqref{f4}. 

Recall, that $\binom{1}{k} = k$. This implies, that $\mathcal P_1^{\alpha \beta} = - k h^{\alpha q} \Gamma^{\beta}_{qs} u^s_x$. Thus, the first statement of the Theorem \ref{t1} holds. 

The coordinates $u^1, \dots, u^n$ are the Darboux coordinates for Poisson bracket and flat coordinates for $\Gamma^\beta_{qs}$. By construction
$$
0 = \pd{h^{\alpha \beta}}{u^s} = \nabla_s h^{\alpha \beta},
$$
thus, the second statement of Theorem \ref{t1} is true. The third statement of the Theorem follows directly from \eqref{f4}.

Finally, assume that for a given odd $k$ (for even the proof is similar) we have a non-degenerate symmetric tensor $h^{\alpha \beta}$ of type $(2, 0)$ and flat symmetric connection $\Gamma^\beta_{qs}$ with condition $\nabla h^{\alpha \beta} = 0$. Taking $u^1, \dots, u^n$ to be the flat coordinates of the connection, we arrive to the situation we have started the proof with. Thus, the theorem is proved.


\section{Proof of Theorem \ref{t2} and Corollary \ref{cor1}} \label{proof2}

First, let us show, that the conditions of Theorem \ref{t2} are necessary. 

Fix Darboux coordinates $u^1, \dots, u^n$ of $\mathcal B_h$. The Christoffel symbols of the Levi-Civita connection of metric $g$ in these coordinates are denoted as $\Gamma^\beta_{qs}$. The contrvariant Christoffel symbols are defined as (our definition differs by the sign from the one in \cite{dub1}) $\Gamma^{\alpha \beta}_s = g^{\alpha q} \Gamma^\beta_{qs}$. The condition $\nabla_q g^{\alpha \beta} = 0$ yields
\begin{equation}\label{g1}
    \pd{g^{\alpha \beta}}{u^q} + \Gamma^{\alpha \beta}_q + \Gamma^{\beta \alpha}_q = 0.
\end{equation}
The symmetry of Levi-Civita connection yields the second condition
\begin{equation}\label{g2}
   \Gamma^{\alpha \beta}_q g^{q \gamma} = \Gamma^{\gamma \beta}_q g^{q \alpha}.
\end{equation}

Consider the functionals
$$
\mathcal H_1 = \int \limits^{2 \pi}_{0} \mathcal a_\alpha u^\alpha \ddd x, \quad \mathcal H_2 = \int \limits^{2 \pi}_{0} \mathcal b_\beta u^\beta \ddd x, \quad \mathcal H_3 = \int \limits^{2 \pi}_{0} \mathcal c_\gamma u^\gamma \ddd x
$$
where $\mathcal a_{\alpha}, \mathcal b_{\beta}, \mathcal c_{\gamma}$ for $\alpha, \beta, \gamma = 1, \dots, n$ are arbitrary $3n$ functions of variable $x$. We introduce notation $\mathcal a^{(i)}_{\alpha} = D^i (\mathcal a_{\alpha})$ and same for $\mathcal b_{\beta}, \mathcal c_{\gamma}$. In these notations the compatibility condition for corresponding brackets is written as follows: for arbitrary functionals $\mathcal H_1, \mathcal H_2, \mathcal H_3$ one has
\begin{equation}\label{g3}
    \begin{aligned}
    & 0 = \{\{\mathcal H_1, \mathcal H_2\}_g, \mathcal H_3\}_h + \{\{\mathcal H_2, \mathcal H_3\}_g, \mathcal H_1\}_h + \{\{\mathcal H_3, \mathcal H_1\}_g, \mathcal H_2\}_h + \{\{\mathcal H_1, \mathcal H_2\}_h, \mathcal H_3\}_g + \\
    + & \{\{\mathcal H_2, \mathcal H_3\}_h, \mathcal H_1\}_g + \{\{\mathcal H_3, \mathcal H_1\}_h, \mathcal H_2\}_g = \\
    = & \int \limits_{0}^{2 \pi} \Bigg(\pd{g^{\alpha \beta}}{u^q} h^{q \gamma} \mathcal a_{\alpha} \mathcal b^{(1)}_{\beta} \mathcal c^{(k)}_{\gamma} - \pd{\Gamma^{\alpha \beta}_s}{u^q} h^{q\gamma} u^s_x \mathcal a_{\alpha} \mathcal b_{\beta} \mathcal c^{(k)}_{\gamma} - \Gamma^{\alpha \beta}_{q} h^{q\gamma} \mathcal a_{\alpha} \mathcal b_{\beta} \mathcal c^{(k + 1)}_{\gamma}\Bigg) \ddd x + \text{c.p} = \\
    = & \int \limits_{0}^{2 \pi} \Bigg( \Gamma^{\alpha \beta}_{q} h^{q\gamma} \mathcal a^{(1)}_{\alpha} \mathcal b_{\beta} \mathcal c^{(k)}_{\gamma} - \Gamma^{\beta \alpha}_{q} h^{q\gamma} \mathcal a_{\alpha} \mathcal b^{(1)}_{\beta} \mathcal c^{(k)}_{\gamma} + \Big( \pd{\Gamma^{\alpha \beta}_{q}}{u^s}h^{q \gamma} - \pd{\Gamma^{\alpha \beta}_{s}}{u^q}h^{q\gamma}\Big) u^s_{x} \mathcal a_{\alpha} \mathcal b_{\beta} \mathcal c^{(k)}_{\gamma} \Bigg) \ddd x + \text{c.p.} 
    \end{aligned}
\end{equation}
Here c.p. stands for cyclic permutations in $\mathcal a_{\alpha}, \mathcal b_{\beta}, \mathcal c_{\gamma}$ and we used identity \eqref{g1}. We denote
\begin{equation}
    A^{\alpha \beta \gamma} = \Gamma^{\alpha \beta}_q h^{q \gamma} \quad \text{   and   } \quad T^{\alpha \beta \gamma} = \Bigg( \pd{\Gamma^{\alpha \beta}_{q}}{u^s}h^{q \gamma} - \pd{\Gamma^{\alpha \beta}_{s}}{u^q}h^{q\gamma}\Bigg) u^s_{x}.
\end{equation}
The compatibility condition \eqref{g3} in these notations takes form
\begin{equation}\label{g4}
    \begin{aligned}
    0 = & \int \limits^{2 \pi}_{0} A^{\alpha \beta \gamma} \mathcal a^{(1)}_{\alpha} \mathcal b_{\beta} \mathcal c^{(k)}_{\gamma} \ddd x - \int \limits^{2 \pi}_{0} A^{\beta \alpha \gamma} \mathcal a_{\alpha} \mathcal b^{(1)}_{\beta} \mathcal c^{(k)}_{\gamma} \ddd x  + \int \limits^{2 \pi}_{0} A^{\beta \gamma \alpha} \mathcal a^{(k)}_{\alpha} \mathcal b^{(1)}_{\beta} \mathcal c_{\gamma} \ddd x - \\
    - & \int \limits^{2 \pi}_{0} A^{\gamma \beta \alpha} \mathcal a^{(k)}_{\gamma} \mathcal b_{\beta} \mathcal c^{(1)}_{\gamma} \ddd x  + \int \limits^{2 \pi}_{0} A^{\gamma \alpha \beta} \mathcal a_{\alpha} \mathcal b^{(k)}_{\beta} \mathcal c^{(1)}_{\gamma} \ddd x - \int \limits^{2 \pi}_{0} A^{\alpha \gamma \beta} \mathcal a^{(1)}_{\alpha} \mathcal b^{(k)}_{\beta} \mathcal c_{\gamma} \ddd x + \\
    + & \int \limits^{2 \pi}_{0} T^{\alpha \beta \gamma} \mathcal a_{\alpha} \mathcal b_{\beta} \mathcal c^{(k)}_{\gamma} \ddd x + \int \limits^{2 \pi}_{0} T^{\beta \gamma \alpha} \mathcal a^{(k)}_{\alpha} \mathcal b_{\beta} \mathcal c_{\gamma} \ddd x + \int \limits^{2 \pi}_{0} T^{\gamma \alpha \beta} \mathcal a_{\alpha} \mathcal b^{(k)}_{\beta} \mathcal c_{\gamma} \ddd x. 
    \end{aligned}
\end{equation}
The following identity holds for arbitrary differential polynomials $\mathcal h_1, \mathcal h_2$
$$
0 = \int \limits_{0}^{2 \pi} D (\mathcal h_1 \mathcal h_2) \ddd x = \int \limits_{0}^{2 \pi}D (\mathcal h_1) \mathcal h_2 \ddd x + \int \limits_{0}^{2 \pi} \mathcal h_1 D(\mathcal h_2) \ddd x.
$$
It is usually referred to as "integration by parts". Applying this identity to the terms in \eqref{g4} we get
$$
\begin{aligned}
& \int \limits^{2 \pi}_{0} A^{\alpha \beta \gamma} \mathcal a^{(1)}_{\alpha} \mathcal b_{\beta} \mathcal c^{(k)}_{\gamma} \ddd x = - \sum \limits_{i + j + r = k} \binom{k}{i, j, r} \int \limits_{0}^{2 \pi} D^{i} \big(A^{\alpha \beta \gamma}\big) \mathcal a^{(1 + j)}_{\alpha} \mathcal b^{(r)}_{\beta} \mathcal c_{\gamma} \ddd x, \\
& \int \limits^{2 \pi}_{0} A^{\beta \alpha \gamma} \mathcal a_{\alpha} \mathcal b^{(1)}_{\beta} \mathcal c^{(k)}_{\gamma} \ddd x = - \sum \limits_{i + j + r = k} \binom{k}{i, j, r} \int \limits^{2 \pi}_{0} D^{i} \big(A^{\beta \alpha \gamma}\big) \mathcal a^{(j)}_{\alpha} \mathcal b^{(r + 1)}_{\beta} \mathcal c_{\gamma} \ddd x, \\
& \int \limits^{2 \pi}_{0} T^{\alpha \beta \gamma} \mathcal a_{\alpha} \mathcal b_{\beta} \mathcal c^{(k)}_{\gamma} \ddd x  = - \int \limits^{2 \pi}_{0} \sum \limits_{i + j + r = k} \binom{k}{i, j, r} D^i \big( T^{\alpha \beta \gamma}\big) \mathcal a^{(j)}_{\alpha} \mathcal b^{(r)}_{\beta} \mathcal c_{\gamma} \ddd x,
\end{aligned}
$$
$$
\begin{aligned}
& \int \limits^{2 \pi}_{0} A^{\gamma \beta \alpha} \mathcal a^{(k)}_{\gamma} \mathcal b_{\beta} \mathcal c^{(1)}_{\gamma} \ddd x = - \int \limits^{2 \pi}_{0} D(A^{\gamma \beta \alpha}) \mathcal a^{(k)}_{\gamma} \mathcal b_{\beta} \mathcal c_{\gamma} \ddd x - \int \limits^{2 \pi}_{0} A^{\gamma \beta \alpha} \mathcal a^{(k + 1)}_{\gamma} \mathcal b_{\beta} \mathcal c^{(1)}_{\gamma} \ddd x - \int \limits^{2 \pi}_{0} A^{\gamma \beta \alpha} \mathcal a^{(k)}_{\gamma} \mathcal b^{(1)}_{\beta} \mathcal c^{(1)}_{\gamma} \ddd x, \\
& \int \limits^{2 \pi}_{0} A^{\gamma \alpha \beta} \mathcal a_{\alpha} \mathcal b^{(k)}_{\beta} \mathcal c^{(1)}_{\gamma} \ddd x = - \int \limits^{2 \pi}_{0} D(A^{\gamma \alpha \beta}) \mathcal a_{\alpha} \mathcal b^{(k)}_{\beta} \mathcal c_{\gamma} \ddd x - \int \limits^{2 \pi}_{0} A^{\gamma \alpha \beta} \mathcal a^{(1)}_{\alpha} \mathcal b^{(k)}_{\beta} \mathcal c_{\gamma} \ddd x - \int \limits^{2 \pi}_{0} A^{\gamma \alpha \beta} \mathcal a_{\alpha} \mathcal b^{(k + 1)}_{\beta} \mathcal c_{\gamma} \ddd x.
\end{aligned}
$$
Here $\binom{n}{p, q, r} = \frac{n!}{p! q! r!}$ are trinomial coefficients. We substitute these equations into \eqref{g4}. After denoting the coefficients in front of the monomial $\mathcal a^{(i)}_{\alpha} \mathcal b^{(j)}_{\beta} \mathcal c_{\gamma}$ as $M^{\alpha \beta \gamma}_{ij}$, the condition \eqref{g4} takes form
\begin{equation}\label{g5}
0 = \int \limits_{0}^{ 2 \pi} \sum \limits_{i, j} M^{\alpha \beta \gamma}_{ij} \mathcal a^{(i)}_\alpha \mathcal b^{(j)}_\beta \mathcal c_{\gamma} \ddd x.    
\end{equation}
Now recall, that $\mathcal a_{\alpha}, \mathcal b_{\beta}, \mathcal c_{\gamma}$ are arbitrary functions in $x$. We perform a rather standard analytical trick: fix $x_0 \in \mathbb R$, $\alpha, \beta, \gamma \leq n$ and $i, j$ with $i + j \leq k + 1$. We take $\mathcal a_q = 0$ for $q \neq \alpha$ and $\mathcal b_r = 0$ for $r \neq \beta$. On $\mathcal a_{\alpha}$ and $\mathcal b_{\beta}$ we put conditions $\mathcal a^{(p)}_\alpha (x_0) = \delta^p_i$ and $\mathcal b^{(q)}_\alpha (x_0) = \delta^q_j$.

We take $\mathcal c_q = 0$ for $q \neq \gamma$ and consider the sequence of smooth functions $\mathcal c^n_{\gamma} (x)$ which converge to delta-function $\delta(x - x_0)$ for $n \to \infty$. We get, that \eqref{g5} converges to $M^{\alpha \beta \gamma}_{ij} (x_0) = 0$. By the construction it holds for all functions $u^\alpha$. As $x_0$ was chosen arbitrary, we get, that \eqref{g2} is equivalent to the collection of the conditions
\begin{equation}\label{g6}
    M^{\alpha \beta \gamma}_{ij} = 0, \quad 1 \leq \alpha, \beta, \gamma \leq n, \quad 0 \leq i + j \leq k + 1.
\end{equation}
Now we are going to study this system.

\begin{Lemma}\label{lemm1}
The compatibility condition \eqref{g4} implies:
\begin{enumerate}
    \item $T^{\alpha \beta \gamma} \equiv 0$
    \item $A^{\alpha \beta \gamma}$ are constants and totally symmetric in $\alpha, \beta, \gamma$ 
\end{enumerate}
\end{Lemma}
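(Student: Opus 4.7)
The plan is to decompose each identity $M^{\alpha\beta\gamma}_{ij}=0$ into its components homogeneous in the jet variables $u^s_x, u^s_{xx},\dots$. Note that $A^{\alpha\beta\gamma}=\Gamma^{\alpha\beta}_q h^{q\gamma}$ is a function of $u$ only (jet-degree $0$), while $T^{\alpha\beta\gamma}$ carries an explicit factor $u^s_x$ (jet-degree $1$), and each application of $D$ to an $A$-coefficient raises jet-degree by one. Since each $M^{\alpha\beta\gamma}_{ij}$ is a differential polynomial in $u$ that must vanish identically, each jet-homogeneous component vanishes separately. This separation lets me treat the $A$- and $T$-content essentially independently, slice by slice in the jet-degree.

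I would begin with the symmetry statement for $A$ by looking at $M^{\alpha\beta\gamma}_{ij}=0$ at jet-degree $0$ for $i+j=k+1$. The only surviving contributions come from the six $A$-terms in \eqref{g4} with no $D$ applied to the $A$-coefficient, since $T$ is already jet-degree $1$ and any differentiation of $A$ increases the jet-degree. Extracting the coefficient of $\mathcal{a}^{(i)}_\alpha \mathcal{b}^{(j)}_\beta \mathcal{c}_\gamma$ for each admissible $(i,j)$ produces a linear algebraic relation among $A^{\sigma(\alpha)\sigma(\beta)\sigma(\gamma)}$ for the six permutations $\sigma\in S_3$, with coefficients that are rational combinations of the multinomial weights $\binom{k}{i,j,r}$. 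Varying $(i,j)$ subject to $i+j=k+1$ produces enough independent relations to force the total $S_3$-symmetry of $A^{\alpha\beta\gamma}$ in its three upper indices.

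Next, to prove that $A$ is constant, I would examine $M^{\alpha\beta\gamma}_{ij}=0$ at jet-degree $1$ for $i+j=k$. Here two sources of the monomial $\mathcal{a}^{(i)}_\alpha \mathcal{b}^{(j)}_\beta \mathcal{c}_\gamma u^s_x$ appear: derivatives $\partial_s A^{\alpha\beta\gamma}$ generated when $D$ hits an $A$-coefficient in one of the six $A$-terms, and the undifferentiated $T^{\alpha\beta\gamma}$ from the last three terms of \eqref{g4}. Using the $S_3$-symmetry of $A$ already established, the $\partial_s A$-contributions organize themselves symmetrically across the six permutations, and a suitable linear combination of the equations over different $(i,j)$ isolates and kills the $T$-part, yielding $\partial_s A^{\alpha\beta\gamma}=0$, i.e., $A$ constant. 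With $A$ constant, the remaining jet-degree-$1$ relations at $i+j=k$ reduce to purely $T$-conditions, whose pattern of permutations is incompatible with a nonzero totally symmetric tensor, forcing $T^{\alpha\beta\gamma}\equiv 0$.

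The main obstacle is the first step: although the idea is transparent, the combinatorial bookkeeping is intricate. One must carefully expand the six integrals in \eqref{g4} via integration by parts with trinomial coefficients, identify which terms land in the slice $i+j=k+1$ at jet-degree zero, and then recognize that the resulting linear system in the six permutations $A^{\sigma(\alpha\beta\gamma)}$ has a kernel consisting precisely of the totally symmetric tensors. Once the $S_3$-symmetry of $A$ is in hand, Steps 2 and 3 are essentially mechanical applications of the same jet-degree splitting at one lower order, so the full force of the lemma follows from controlling this first combinatorial extraction.
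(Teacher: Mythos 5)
Your proposal is correct in outline, and I checked that its two key combinatorial claims do hold: the slice $i+j=k+1$ of the system \eqref{g6} consists only of undifferentiated $A$-terms, and besides $M^{\alpha\beta\gamma}_{k+1\,0}=A^{\gamma\beta\alpha}-A^{\alpha\beta\gamma}$ the interior coefficients give $-iA^{\alpha\beta\gamma}+jA^{\beta\alpha\gamma}=0$ (up to the factor $k!/(i!\,j!)$), which for $k=3$ (via $M^{\alpha\beta\gamma}_{22}$) yields the transposition of the first two indices and hence full $S_3$-symmetry, while for $k\geq 5$ it even forces $A=0$; likewise the interior coefficients of the slice $i+j=k$ have the form $\frac{k!}{i!j!}\bigl(-i\,D(A^{\alpha\beta\gamma})+j\,D(A^{\beta\alpha\gamma})-T^{\alpha\beta\gamma}\bigr)$, so subtracting two of them (e.g.\ $(i,j)=(1,2)$ and $(2,1)$ when $k=3$) kills $T$ and gives $D(A)=0$, after which any one of them gives $T^{\alpha\beta\gamma}=0$ directly. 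This is the same basic machinery as the paper (extracting coefficients $M^{\alpha\beta\gamma}_{ij}$ from \eqref{g4}), but a genuinely different path through the system: the paper first obtains $T\equiv 0$ from $M^{\alpha\beta\gamma}_{00}=-D^k(T^{\alpha\beta\gamma})=0$ together with Lemma \ref{lemm5}, then symmetry from $M_{k+1\,0},M_{k\,1},M_{1\,k}$, then constancy from $M_{k\,0}$; your route never uses $M_{00}$ or Lemma \ref{lemm5} and gets constancy of $A$ and $T\equiv0$ simultaneously from the $i+j=k$ slice, which is arguably more uniform and purely linear-algebraic. Two small corrections: the kernel of your $i+j=k+1$ system is ``precisely the totally symmetric tensors'' only for $k=3$ (for $k\geq5$ it is zero, which is still consistent with the lemma), and the final step needs no ``pattern of permutations'' argument --- $T$ is not symmetric a priori, and the interior relations simply set it to zero once $D(A)=0$. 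Also note that each $M^{\alpha\beta\gamma}_{ij}$ is automatically homogeneous of differential degree $k+1-i-j$ (applications of $D$ raise the weighted degree, not the number of jet factors, by one), so the jet-degree splitting you invoke is built in rather than an extra decomposition; the remaining work is exactly the bookkeeping you describe, and it goes through.
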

\begin{proof}
Fix arbitrary $\alpha, \beta, \gamma$. From \eqref{g6} and \eqref{g4} we get 
$$
M_{00}^{\alpha \beta \gamma} = - D^k \big(T^{\alpha \beta \gamma}\big) = 0.
$$
Recall that $T^{\alpha \beta \gamma}$ is a homogeneous differential polynomial of degree one, with coefficients that do not depend on $x$. By Lemma \ref{lemm5} the vanishing of the derivative implies that this polynomial is identically zero.

Now consider the following coefficients:
\begin{equation}\label{calc4}
\begin{aligned}
    & 0 = M^{\alpha \beta \gamma}_{k + 1 \,0} =  A^{\gamma \beta \alpha} - A^{\alpha \beta \gamma}, \\
    & 0 = M^{\alpha \beta \gamma}_{k\, 1} = - k A^{\alpha \beta \gamma} + A^{\beta \alpha \gamma} + A^{\gamma \beta \alpha} + A^{\beta \gamma \alpha}, \\
    & 0 = M^{\alpha \beta \gamma}_{1\, k} = - A^{\alpha \beta \gamma} + k A^{\beta \alpha \gamma} - A^{\gamma \alpha \beta} - A^{\alpha \gamma \beta}, \\
    & 0 = M^{\alpha \beta \gamma}_{k\, 0} = - k D\big[A^{\alpha \beta \gamma}\big] + D\big[A^{\gamma \beta \alpha}\big] + T^{\beta \gamma \alpha} - T^{\alpha \beta \gamma}.
\end{aligned}
\end{equation}
The first equation from \eqref{calc4} yields $A^{\gamma \beta \alpha} = A^{\alpha \beta \gamma}$. Note, this holds for all $\alpha, \beta, \gamma$. The second and third equations together give 
\begin{equation*}
    \begin{aligned}
    0 & = M^{\alpha \beta \gamma}_{k\, 1} + M^{\alpha \beta \gamma}_{1\, k} = \\
    & = - k A^{\alpha \beta \gamma} + A^{\beta \alpha \gamma} + A^{\gamma \beta \alpha} + A^{\beta \gamma \alpha} - A^{\alpha \beta \gamma} + k A^{\beta \alpha \gamma} - A^{\gamma \alpha \beta} - A^{\alpha \gamma \beta} = \\
    & = (k + 1) \big( A^{\beta \alpha \gamma} - A^{\alpha \beta \gamma}\big)
    \end{aligned}
\end{equation*}
Here we used $A^{\beta \gamma \alpha} = A^{\alpha \gamma \beta}$, which follows from the eailier as $\alpha, \beta, \gamma$ are arbitrary.

So, we have shown, that $A^{\alpha \beta \gamma} = A^{\beta \alpha \gamma}$ and $A^{\alpha \beta \gamma} = A^{\gamma \beta \alpha}$. The permutation group in three elements is generated by arbitrary two transpositions, thus, the corresponding object is totally symmetric in upper indices.

Finally, the last equation of \eqref{calc4} yields $(1 - k) D \big(A^{\alpha \beta \gamma}\big) = 0$. As $A^{\alpha \beta \gamma}$ is a differential polynomial of degree zero, which does not depend on $x$, we get that it is constant.
\end{proof}

Due to the symmetry of $A^{\alpha \beta \gamma}$, the second equation in \eqref{calc4} takes form
$$
0 = (3 - k) A^{\alpha \beta \gamma}.
$$
Thus, for $k \geq 5$ we get that $A^{\alpha \beta \gamma}$ identically vanish. Recall, that $A^{\alpha \beta \gamma} = \Gamma^{\alpha \beta}_s h^{s \gamma} = g^{\alpha q} \Gamma^\beta_{qs} h^{s \gamma}$. As both $h$ and $g$ are non-degenerate, we get, that $\Gamma^\beta_{qs}$ identically vanish and in coordinates $u^1, \dots, u^n$ both metrics are in constant form.

We have shown, that the statement of the Lemma \ref{lemm1} follows from conditions \eqref{g6}, which, in turn, are equivalent to the compatibility of $\mathcal A_g$ and $\mathcal B_h$. Let us show, that in case $k = 3$ the statement of the Lemma \ref{lemm1} is, in fact, equivalent to the compatibility. The compatibility condition in the form \eqref{g4} is written as
\begin{equation*}
\begin{aligned}
    0 = & A^{\alpha \beta \gamma} \Bigg( \int \limits^{2 \pi}_{0} \mathcal a^{(1)}_{\alpha} \mathcal b_{\beta} \mathcal c^{(3)}_{\gamma} \ddd x - \int \limits^{2 \pi}_{0} \mathcal a_{\alpha} \mathcal b^{(1)}_{\beta} \mathcal c^{(3)}_{\gamma} \ddd x  + \int \limits^{2 \pi}_{0} \mathcal a^{(3)}_{\alpha} \mathcal b^{(1)}_{\beta} \mathcal c_{\gamma} \ddd x - \\
    - & \int \limits^{2 \pi}_{0} \mathcal a^{(3)}_{\gamma} \mathcal b_{\beta} \mathcal c^{(1)}_{\gamma} \ddd x  + \int \limits^{2 \pi}_{0}  \mathcal a_{\alpha} \mathcal b^{(3)}_{\beta} \mathcal c^{(1)}_{\gamma} \ddd x - \int \limits^{2 \pi}_{0}  \mathcal a^{(1)}_{\alpha} \mathcal b^{(3)}_{\beta} \mathcal c_{\gamma} \ddd x \Bigg) = \\
    = & A^{\alpha \beta \gamma} \Bigg(- \int \limits^{2 \pi}_{0} \mathcal a^{(4)}_{\alpha} \mathcal b_{\beta} \mathcal c_{\gamma} \ddd x - 3 \int \limits^{2 \pi}_{0} \mathcal a^{(3)}_{\alpha} \mathcal b^{(1)}_{\beta} \mathcal c_{\gamma} \ddd x - 3 \int \limits^{2 \pi}_{0} \mathcal a^{(2)}_{\alpha} \mathcal b^{(2)}_{\beta} \mathcal c_{\gamma} \ddd x - \int \limits^{2 \pi}_{0} \mathcal a^{(1)}_{\alpha} \mathcal b^{(3)}_{\beta} \mathcal c_{\gamma} \ddd x + \\
    & + \int \limits^{2 \pi}_{0} \mathcal a^{(3)}_{\alpha} \mathcal b^{(1)}_{\beta} \mathcal c_{\gamma} \ddd x + 3 \int \limits^{2 \pi}_{0} \mathcal a^{(2)}_{\alpha} \mathcal b^{(2)}_{\beta} \mathcal c_{\gamma} \ddd x + 3\int \limits^{2 \pi}_{0} \mathcal a^{(1)}_{\alpha} \mathcal b^{(3)}_{\beta} \mathcal c_{\gamma} \ddd x + \int \limits^{2 \pi}_{0} \mathcal a_{\alpha} \mathcal b^{(4)}_{\beta} \mathcal c_{\gamma} \ddd x + \\
    & + \int \limits^{2 \pi}_{0} \mathcal a^{(4)}_{\alpha} \mathcal b_{\beta} \mathcal c_{\gamma} \ddd x + \int \limits^{2 \pi}_{0} \mathcal a^{(3)}_{\alpha} \mathcal b^{(1)}_{\beta} \mathcal c_{\gamma} \ddd x - \int \limits^{2 \pi}_{0} \mathcal a^{(1)}_{\alpha} \mathcal b^{(3)}_{\beta} \mathcal c_{\gamma} \ddd x - \int \limits^{2 \pi}_{0} \mathcal a_{\alpha} \mathcal b^{(4)}_{\beta} \mathcal c_{\gamma} \ddd x + \\
    & + \int \limits^{2 \pi}_{0} \mathcal a^{(3)}_{\alpha} \mathcal b^{(1)}_{\beta} \mathcal c_{\gamma} \ddd x - \int \limits^{2 \pi}_{0}  \mathcal a^{(1)}_{\alpha} \mathcal b^{(3)}_{\beta} \mathcal c_{\gamma} \ddd x \Bigg) = 0.
\end{aligned}
\end{equation*}
Again, we recall, that $A^{\alpha \beta \gamma} = \Gamma^{\alpha \beta}_s h^{s \gamma}$ and $h^{s \gamma}$ is in constant form. Thus, $\Gamma^{\alpha \beta}_s$ are constant. As $A^{\alpha \beta \gamma} = A^{\beta \alpha \gamma}$ we get, that $\Gamma^{\alpha \beta}_s$ are also symmetric in upper indices.

We denote $a^{\alpha \beta}_s = - \Gamma^{\alpha \beta}_s$. The condition \eqref{g1} on contrvariant Christoffel symbols implies that $g^{\alpha \beta} = 2 (b^{\alpha \beta} + a^{\alpha \beta}_s u^s)$. Here we denoted the constant term as $2 b^{\alpha \beta}$, this is exactly the matrix of $g^{\alpha \beta}$ at the coordinate origin, thus, it is non-degenerate. The second condition \eqref{g2} yields relation $a^{\alpha \beta}_q (b^{q \gamma} + a^{q\gamma}_s u^s) = a^{\gamma \beta}_q (b^{q \alpha} + a^{q\alpha}_s u^s)$. This yields two conditions 
\begin{equation}\label{fr1}
a^{\alpha \beta}_q b^{q \gamma} = a^{\gamma \beta}_q b^{q \alpha}, \quad a^{\alpha \beta}_q a^{q \gamma}_s = a^{\gamma \beta}_q a^{q \alpha}_s.    
\end{equation}
Finally, the symmetry $A^{\alpha \beta \gamma} = A^{\gamma \beta \alpha}$ yields $a^{\alpha \beta}_q h^{q\gamma} = a^{\gamma \beta}_q h^{q\alpha}$. Thus, we have shown that the conditions of Theorem \ref{t2} are necessary.

For $k \geq 5$ it is obvious, that these conditions are also sufficient. For $k = 3$ the sufficiency can be found in the literature (see \cite{balnov},\cite{str}), but to keep our work self-contained we prove this fact. 

We say that coordinates $u^1, \dots, u^n$ are Frobenius coordinates for contrvariant metric $g^{\alpha \beta}$ if
\begin{equation}\label{frob}
g^{\alpha \beta} = 2(b^{\alpha \beta} + a^{\alpha \beta}_s u^s),    
\end{equation}
where constants $b^{\alpha \beta}, a^{\alpha \beta}_s$ satisfy \eqref{fr1}. Note that by construction, the constants are both symmetric in upper indices. Recall that the name comes from the fact that $a^{\alpha \beta}_s$ define commutative associative algebra with Frobenius form $b^{\alpha \beta}$. The next Lemma proves a stronger result, then the one we need.

\begin{Lemma}\label{lemm2}
Let $g$ be a contrvariant metric and $u^1, \dots, u^n$ a coordinate system. The following two conditions are equivalent:
\begin{enumerate}
    \item In coordinates $u^1, \dots, u^n$ the contrvariant Christoffel symbols of $g$ are constant and symmetric in upper indices
    \item The metric $g^{\alpha \beta}$ is in Frobenius coordinates.
\end{enumerate}
If either of these conditions hold, the metric is flat and $- \Gamma^{\alpha \beta}_s = a^{\alpha \beta}_s$.
\end{Lemma}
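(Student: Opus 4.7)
The plan is to prove the two implications separately and then verify the addendum on flatness. For the implication (1) $\Rightarrow$ (2), I would start from the metric compatibility condition \eqref{g1}, which reads $\partial_s g^{\alpha\beta} + \Gamma^{\alpha\beta}_s + \Gamma^{\beta\alpha}_s = 0$. Under the assumption that $\Gamma^{\alpha\beta}_s$ is constant and symmetric in its upper indices, this collapses to $\partial_s g^{\alpha\beta} = -2\Gamma^{\alpha\beta}_s$, a constant. Direct integration then yields $g^{\alpha\beta} = 2(b^{\alpha\beta} + a^{\alpha\beta}_s u^s)$ where I set $a^{\alpha\beta}_s := -\Gamma^{\alpha\beta}_s$ and the constants $b^{\alpha\beta}$ arise as constants of integration. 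Finally, I would plug this linear expression for $g^{\alpha\beta}$ into the symmetry relation \eqref{g2}, i.e.\ $\Gamma^{\alpha\beta}_q g^{q\gamma} = \Gamma^{\gamma\beta}_q g^{q\alpha}$; since both sides are polynomials of degree $\leq 1$ in the $u^s$, equating the constant parts gives $a^{\alpha\beta}_q b^{q\gamma} = a^{\gamma\beta}_q b^{q\alpha}$ and the $u^s$-linear parts give $a^{\alpha\beta}_q a^{q\gamma}_s = a^{\gamma\beta}_q a^{q\alpha}_s$. This delivers exactly the Frobenius identities \eqref{fr1}.

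For (2) $\Rightarrow$ (1), I would start from the standard Levi-Civita formula $\Gamma^\beta_{qs} = \tfrac{1}{2} g^{\beta p}(\partial_q g_{ps} + \partial_s g_{pq} - \partial_p g_{qs})$ and convert derivatives of the covariant metric to those of the contravariant one via $\partial_r g_{ij} = -g_{ia}g_{jb}\partial_r g^{ab}$. After contraction with $g^{\alpha q}$, the expression for $\Gamma^{\alpha\beta}_s = g^{\alpha q}\Gamma^\beta_{qs}$ reduces to a sum of three terms, two of which are proportional to $g^{\alpha q}\partial_q g^{\beta b}g_{sb}$ (and its analogue with $\alpha \leftrightarrow \beta$). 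Here the Frobenius identities are what make the calculation close: substituting $\partial_q g^{\alpha\beta} = 2a^{\alpha\beta}_q$ and using both $a^{\alpha\beta}_q b^{q\gamma} = a^{\gamma\beta}_q b^{q\alpha}$ and $a^{\alpha\beta}_q a^{q\gamma}_s = a^{\gamma\beta}_q a^{q\alpha}_s$ shows that the two symmetric-in-$(\alpha,\beta)$ terms cancel, leaving $\Gamma^{\alpha\beta}_s = -a^{\alpha\beta}_s$. This quantity is manifestly constant and, by the symmetry of $a^{\alpha\beta}_s$ in its upper indices, symmetric in $(\alpha,\beta)$, confirming condition (1) and simultaneously the identification $-\Gamma^{\alpha\beta}_s = a^{\alpha\beta}_s$.

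For the flatness statement, with the Christoffel data already in hand I would compute the Riemann tensor $R^\beta_{qrs} = \partial_r \Gamma^\beta_{qs} - \partial_s \Gamma^\beta_{qr} + \Gamma^\beta_{rm}\Gamma^m_{qs} - \Gamma^\beta_{sm}\Gamma^m_{qr}$ using $\Gamma^\beta_{qs} = -g_{qa} a^{a\beta}_s$. The derivative pieces produce terms of the shape $2 g_{qm}g_{an}a^{mn}_r a^{a\beta}_s$ (after applying $\partial g_{ij} = -g_{ia}g_{jb}\partial g^{ab}$), while the quadratic pieces produce terms of the shape $g_{qb}g_{ra}a^{a\beta}_m a^{bm}_s$. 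Skew-symmetrising in $r \leftrightarrow s$ and applying the associativity identity $a^{\alpha\beta}_q a^{q\gamma}_s = a^{\gamma\beta}_q a^{q\alpha}_s$ (together with the upper-index symmetry of $a$) yields exact cancellation of all terms, showing $R^\beta_{qrs} \equiv 0$.

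\textbf{Main obstacle.} The routine direction (1) $\Rightarrow$ (2) is essentially bookkeeping. The real work is in (2) $\Rightarrow$ (1): the reduction of the three-term Levi-Civita formula to $-a^{\alpha\beta}_s$ requires using both Frobenius conditions in a coordinated way, and the subsequent verification of flatness is a dense index computation in which one must carefully track how associativity (the second Frobenius identity) kills the curvature. This associativity-to-flatness reduction is the only non-mechanical step; once it is established, it furnishes the precise algebraic analogue of the fact that Balinskii-Novikov algebras of commutative type give flat Poisson pencils.
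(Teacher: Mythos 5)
Your proposal is correct, but its second half follows a genuinely different route from the paper. The direction (1) $\Rightarrow$ (2) is the same argument the paper gives just before the lemma (integrate \eqref{g1} under constancy and symmetry, then split \eqref{g2} into its constant and $u$-linear parts to get \eqref{fr1}); the paper's proof of the lemma simply cites this as already done. For (2) $\Rightarrow$ (1) the paper does not compute the Levi-Civita connection from the standard three-term formula as you do: it \emph{guesses} the connection $\Gamma^\beta_{rs} = -g_{rq}a^{q\beta}_s$, checks $\nabla g = 0$ and torsion-freeness directly from \eqref{fr1}, and invokes uniqueness of the Levi-Civita connection, which immediately gives $\Gamma^{\alpha\beta}_s = -a^{\alpha\beta}_s$; your direct reduction of $\tfrac12 g^{\alpha q}g^{\beta p}(\partial_q g_{ps}+\partial_s g_{pq}-\partial_p g_{qs})$ works too (the two extra terms cancel precisely because $a^{\alpha\beta}_q g^{q\gamma}=a^{\gamma\beta}_q g^{q\alpha}$, which needs both identities in \eqref{fr1} since $g$ is affine in $u$), but it is heavier. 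For flatness the paper again takes the shortcut: it uses Dubrovin's contravariant curvature formula, in which constancy of $\Gamma^{\alpha\beta}_s$ kills the derivative terms and the quadratic terms cancel in one line by associativity; your covariant computation of $R^\beta_{qrs}$ with $\Gamma^\beta_{qs}=-g_{qa}a^{a\beta}_s$ also closes, but be aware that the cancellation is not obtained from associativity and upper-index symmetry of $a$ alone: you also need the symmetry $g_{qa}a^{a\beta}_s = g_{sa}a^{a\beta}_q$ (equivalently, invariance of $g$, i.e.\ torsion-freeness of the connection you just computed) to line the four terms up as one and the same triple product. Since torsion-freeness is automatic for the Levi-Civita connection you obtained, this is an imprecision in the bookkeeping rather than a gap; still, the paper's contravariant-formula argument buys a two-line flatness check where yours requires a careful index chase.
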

\begin{proof}
We have already proven that in the statement of the Lemma condition 1 implies condition 2. Now, let us show the inverse. 

In given coordinates define Christoffel symbols of the Levi-Civita connection of $g^{\alpha \beta}$ as
$$
\Gamma^\beta_{rs} = - g_{rq} a^{q\beta}_s.
$$
By construction $\nabla_s g^{\alpha \beta} = 2a^{\alpha \beta}_s - a^{\alpha \beta}_s - a^{\beta \alpha}_s = 0$. At the same time conditions \eqref{fr1} imply, that $a^{\alpha \beta}_s g^{s\gamma} = g^{\alpha r} \Gamma^{\beta}_{rs} g^{s \gamma} = g^{\alpha r} \Gamma^{\beta}_{sr} g^{s \gamma}$. Thus, the connection is symmetric and $g$ is parallel along it. Thus, it implies that the corresponding connection is Levi-Civita connection and $a^{\alpha \beta}_s = - \Gamma^{\alpha \beta}_s$. Thus, condition 2 of Lemma implies condition 1.

Finally, we apply the general formula for Riemann tensor of $g$ in terms of contrvariant Christoffel symbols (formula 0.9 in \cite{dub1}):
\begin{equation*}
R^{\beta \gamma \alpha}_s = - \pd{\Gamma^{\gamma \alpha}_p}{u^s} g^{p \beta} + \pd{\Gamma^{\gamma \alpha}_s}{u^p} g^{p \beta} + \Gamma^{\beta \gamma}_q \Gamma^{q \alpha}_s - \Gamma^{\beta \alpha}_q \Gamma^{q \gamma}_s
\end{equation*}
The contrvariant Christoffel symbols are constants. We get $R^{\beta \gamma \alpha}_s = a^{\beta \gamma}_q a^{q \alpha}_s - a^{\beta \alpha}_q a^{q \gamma}_s = a^{\gamma \beta}_q a^{q \alpha}_s - a^{\alpha \beta}_q a^{q \gamma}_s = 0$. The last condition is the second condition from \eqref{fr1} and here we used symmetry of $a^{\alpha \beta}_s$ in upper indices. The Lemma is proved. 
\end{proof}

The Lemma \ref{lemm2} implies that the conditions of Theorem \ref{t2} for $k = 3$ are sufficient. Now, let us prove Corollary \ref{cor1}.

By construction $S^{\alpha \beta}_s = g^{\alpha q} \big( \Gamma^\beta_{qs} - \bar \Gamma^\beta_{qs}\big)$. Fix Darboux coordinates $u^1, \dots, u^n$ of $\mathcal B_h$. We get that in given coordinates $\Gamma^{\alpha \beta}_s = S^{\alpha \beta}_s = - a^{\alpha \beta}_s$. This implies that the conditions of Corollary \ref{cor1} are equivalent to those of Theorem \ref{t2} in Darboux coordinates. 

Both Theorem \ref{t2} and Corollary \ref{cor1} are proved.


\section{Proof of Theorem \ref{t4}} \label{proof3}

We start with trivial case: the classification of pairs of bilinear forms over $\mathbb R$ can be found in \cite{lan} (Theorem 9.2). Each pair $h^{\alpha \beta}, b^{\alpha \beta}$ of non-degenerate forms in dimension two can be brought to one of the following form
\begin{equation*}
    \begin{aligned}
    & \left\{\left(\begin{array}{cc}
         h_1 & 0  \\
         0 & h_2 
    \end{array}\right), \left(\begin{array}{cc}
         1 & 0  \\
         0 & 1 
    \end{array}\right)\right\},  \left\{\left(\begin{array}{cc}
         h_1 & 0  \\
         0 & h_2 
    \end{array}\right), \left(\begin{array}{cc}
         - 1 & 0  \\
         0 & 1 
    \end{array}\right)\right\}, \left\{\left(\begin{array}{cc}
         h_1 & 0  \\
         0 & h_2 
    \end{array}\right), \left(\begin{array}{cc}
         - 1 & 0  \\
         0 & - 1 
    \end{array}\right)\right\},\\
    & \left\{\left(\begin{array}{cc}
         0 & h_1  \\
         h_1 & 1 
    \end{array}\right), \left(\begin{array}{cc}
         0 & 1  \\
         1 & 0 
    \end{array}\right)\right\}, \left\{\left(\begin{array}{cc}
         - h_2 & h_1  \\
         h_1 & h_2 
    \end{array}\right), \left(\begin{array}{cc}
         0 & 1  \\
         1 & 0 
    \end{array}\right)\right\}\\
    \end{aligned}
\end{equation*}
In each family, different collections of constants $h_1, h_2$ provide pairs, which cannot be transformed one into the other by coordinate change. This provides families $(01)$ through $(05)$ in the statement of Theorem \ref{t4}.

We will need the following reformulation of \eqref{i1}. In a given basis, consider a pair of operators $R_1 \eta = \eta \star \eta_1$ and $R_2 \eta = \eta \star \eta_2$. Bilinear form $h^{\alpha \beta}$ with matrix $H$ is invariant form if and only if both operators are self-adjoint with respect to this form, that is
$$
R_1^TH = H R_1, \quad R_2^TH = H R_2.
$$

We proceed as follows: for each algebra from the list \eqref{alg} we explicitly write $R_1, R_2$ and $h^{\alpha \beta}$, calculate using the above observation (we omit the calculations itself). 

Then we describe the coordinate changes that preserve the structure relation of the algebras from \eqref{alg}. Using these coordinate changes, we bring $h$ to some normal form. In case $\mathfrak a_i, i \geq 2$ the corresponding coordinate system is unique and, thus, different collection of constants $b_j, h_j, j = 1,2$ provide non-equivalent normal forms of inhomogeneous Hamiltonian operators of type $1 + 3$. 

In case of $\mathfrak a_1$ there are two canonical coordinate systems, but the corresponding coordinate changes preserve the form of $b^{\alpha \beta}$, thus, the normal form is unique.

\textbf{Case $\mathfrak a_1$: } In a given basis $\eta^1, \eta^2$ the only non-zero structure relation of this algebra is
$$
\eta^2 \star \eta^2 = \eta^2.
$$
The operators $R_1, R_2$ and invariant form $h^{\alpha \beta}$ are
$$
R_1 = \left(\begin{array}{cc}
     0 & 0  \\
     0 & 0 
\end{array}\right), \quad R_2 = \left(\begin{array}{cc}
     0 & 0  \\
     0 & 1 
\end{array}\right), \quad h^{\alpha \beta} = \left(\begin{array}{cc}
     h_1 & 0  \\
     0 & h_2 
\end{array}\right).
$$
The coordinate change that preserves the structure relations of $\mathfrak a_1$ is
$$
\bar \eta^1 = \alpha \eta^1, \quad \bar \eta^2 = \eta^2.
$$
Under this transformation the coefficient $h_1$ can be made into either $1$ or $-1$. The corresponding coordinate system is not uniquely defined. Besides trivial coordinate change we have one more: $\bar \eta^1 = - \eta^1, \bar \eta^2 = \eta^2$. At the same time this coordinate change does not change the normal form, so we get two normal forms $(11)$ and $(12)$.

\textbf{Case $\mathfrak a_2$:} In a given basis $\eta^1, \eta^2$ the only non-zero structure relation of this algebra is
$$
\eta^2 \star \eta^2 = \eta^1.
$$
The operators $R_1, R_2$ and invariant form $h^{\alpha \beta}$ are
$$
R_1 = \left(\begin{array}{cc}
     0 & 0  \\
     0 & 0 
\end{array}\right), \quad R_2 = \left(\begin{array}{cc}
     0 & 1  \\
     0 & 0 
\end{array}\right), \quad h^{\alpha \beta} = \left(\begin{array}{cc}
     0 & h_1  \\
     h_1 & h_2 
\end{array}\right).
$$
The coordinate change that preserves the structure relation of $\mathfrak a_2$ is
\begin{equation}\label{sr1}
\bar \eta^1 = \alpha^2 \eta^1, \quad \bar \eta^2 = \beta \eta^1 + \alpha \eta^2.    
\end{equation}
Taking coordinate change in the form
$$
\bar \eta^1 = \frac{1}{(h_1)^{2/3}} \eta^1, \quad \bar \eta^2 = - \frac{h_2}{2 (h_1)^{4/3}} \eta^1 + \frac{1}{(h_1)^{1/3}} \eta^2
$$
we bring $h^{\alpha \beta}$ to the form $h_1 = 1, h_2 = 0$. 

Coordinate change \ref{sr1} that preserves $h^{\alpha \beta}$, satisfies conditions $h(\bar \eta^1, \bar \eta^2) = \alpha^3 = 1$ and $h(\bar \eta^2, \bar \eta^2) = 2 \alpha \beta = 0$. This implies that the only such coordinate change is trivial one. Thus, the canonical coordinate system is unique and we obtain the normal form $(21)$.

\textbf{Case $\mathfrak a_3$:} In a given basis $\eta^1, \eta^2$ the non-zero structure relations of this algebra are
$$
\eta^2 \star \eta^2 = \eta^2, \quad \eta^1 \star \eta^2 = \eta^1. 
$$ 
The operators $R_1, R_2$ and invariant form $h^{\alpha \beta}$ are
$$
R_1 = \left(\begin{array}{cc}
     0 & 1  \\
     0 & 0 
\end{array}\right), \quad R_2 = \left(\begin{array}{cc}
     1 & 0  \\
     0 & 1 
\end{array}\right), \quad h^{\alpha \beta} = \left(\begin{array}{cc}
     0 & h_1  \\
     h_1 & h_2 
\end{array}\right).
$$
The coordinate change that preserves the structure relations of $\mathfrak a_3$ is
$$
\bar \eta^1 = \alpha \eta^1, \quad \bar \eta^2 = \eta^2.
$$
We take $\alpha = \frac{1}{h_1}$ and get the invariant form with $h_1 = 1$. Following the same argument as in previous case, we get that the corresponding coordinate system is unique. We get the normal form $(31)$.

\textbf{Case $\mathfrak a_4$:}  In a given basis $\eta^1, \eta^2$ the non-zero structure relations of this algebra are
$$
\eta^1 \star \eta^1 = \eta^1, \quad \eta^2 \star \eta^2 = \eta^2. 
$$ 
In given coordinates operators $R_1, R_2$ and invariant form $h^{\alpha \beta}$ are
$$
R_1 = \left(\begin{array}{cc}
     1 & 0  \\
     0 & 0 
\end{array}\right), \quad R_2 = \left(\begin{array}{cc}
     0 & 0  \\
     0 & 1 
\end{array}\right), \quad h^{\alpha \beta} = \left(\begin{array}{cc}
     h_1 & 0  \\
     0 & h_2 
\end{array}\right).
$$
The only coordinate change that preserves the structure relations of algebra $\mathfrak a_4$ is the trivial one. Thus, the corresponding coordinate system is unique and we obtain the normal form $(41)$.

\textbf{Case $\mathfrak a_5$:} In a given basis $\eta^1, \eta^2$ the non-zero structure relations of this algebra are
$$
\quad \quad \eta^2 \star \eta^2 = \eta^2, \quad \eta^1 \star \eta^1 = - \eta^2, \quad \eta^1 \star \eta^2 = \eta^1.. 
$$
The operators $R_1, R_2$ and invariant form $h^{\alpha \beta}$ are
$$
R_1 = \left(\begin{array}{cc}
     0 & 1  \\
     - 1 & 0 
\end{array}\right), \quad R_2 = \left(\begin{array}{cc}
     1 & 0  \\
     0 & 1 
\end{array}\right), \quad h^{\alpha \beta} = \left(\begin{array}{cc}
     - h_2 & h_1  \\
     h_1 & h_2 
\end{array}\right).
$$
The only coordinate change that preserves the structure relations of algebra $\mathfrak a_5$ is the trivial one. Same as in the previous case, the corresponding coordinate system is unique. We get the normal form $(51)$. The Theorem is proved.


\section{Proof of Theorem \ref{t3}} \label{pr3} \label{proof4}

We start with several algebraic lemmas.

\begin{Lemma}\label{lemm3}
Consider vector space $V$ of dimension $n$ and its dual $V^*$. Fix a basis $\eta^1, \dots, \eta^n$ in $V^*$ and the dual basis in $V$. Assume that in given basis the collection of constants $a^{\alpha \beta}_s$ defines the structure of a commutative associative algebra. Assume also that $b^{\alpha \beta}$ is Frobenius form of this algebra. Then
$$
c^\beta_{rs} = \frac{1}{2} b_{rq} a^{q \beta}_s
$$
defines the structure of a commutative associative algebra on $V$ in the dual basis. In other words
$$
c^\beta_{ps} = c^\beta_{sp}, \quad c^\beta_{sq} c^q_{pr} = c^q_{sp} c^\beta_{qr}.
$$
This algebra is isomorphic to the one on $V^*$. In particular, if $e \in V^*$ is unity of $a^{\alpha \beta}_s$ with coordinates $e_1, \dots, e_n$, then $f \in V$ with coordinates $f^p = 2 b^{pq} e_q$ is unity of $c^\beta_{ps}$.
\end{Lemma}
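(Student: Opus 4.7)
The plan is to exhibit a single algebra isomorphism $\psi: V^* \to V$ built from $b$, transport the product from $V^*$ to $V$ through $\psi$, and then check that the resulting structure constants on $V$ coincide with $c^\beta_{rs}$. Once this is done, commutativity and associativity of $c^\beta_{rs}$ are automatic (they are inherited from $(V^*, \star)$ via the isomorphism), and the unity statement drops out as a one-line corollary.

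Concretely, I would define $\psi: V^* \to V$ in the given bases by $\psi(\xi)^\alpha = 2 b^{\alpha\beta} \xi_\beta$, which is a linear isomorphism because $b^{\alpha\beta}$ is non-degenerate; its inverse sends $\eta_r$ to $\tfrac{1}{2} b_{rx}\eta^x$. I then transport $\star$ to $V$ by $\eta_r \bullet \eta_s := \psi\bigl(\psi^{-1}(\eta_r) \star \psi^{-1}(\eta_s)\bigr)$. Expanding $\eta^x \star \eta^y = a^{xy}_z \eta^z$ and applying $\psi$ gives the structure constants of $\bullet$ in the form
\begin{equation*}
\eta_r \bullet \eta_s = \tfrac{1}{2}\, b_{rx} b_{sy} a^{xy}_z b^{zw}\, \eta_w.
\end{equation*}
So what I need to verify is that the coefficient matches $c^w_{rs} = \tfrac{1}{2} b_{rq} a^{qw}_s$.

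The identification rests on a short index manipulation that is exactly where the Frobenius hypothesis is used, and it is the only place where something non-formal happens. By commutativity of $\star$ I rewrite $a^{xy}_z = a^{yx}_z$, then invoke the invariance identity $a^{\alpha\beta}_q b^{q\gamma} = a^{\gamma\beta}_q b^{q\alpha}$ (with $\alpha \to y$, $\beta \to x$, $\gamma \to w$) to get $a^{yx}_z b^{zw} = a^{wx}_z b^{zy}$. Substituting and contracting with $b_{sy}$ collapses one pair of factors to $\delta^z_s$, leaving $a^{wx}_s = a^{xw}_s$; this shows the transported structure constants are precisely $c^\beta_{rs}$. Since the algebra on $V$ is then, by construction, the image of $(V^*, \star)$ under the algebra isomorphism $\psi$, commutativity $c^\beta_{ps} = c^\beta_{sp}$ and associativity $c^\beta_{sq} c^q_{pr} = c^q_{sp} c^\beta_{qr}$ are now automatic, and the two algebras are isomorphic as claimed.

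For the final statement about the unity, an isomorphism of unital algebras carries unity to unity. If $e = e_q \eta^q \in V^*$ is the unity of $\star$, then $f := \psi(e) \in V$ is the unity of $\bullet$, and by definition of $\psi$ its components are $f^p = 2 b^{pq} e_q$, which is exactly the formula asserted. The main (and essentially only) obstacle throughout is keeping the Frobenius identity correctly oriented while the indices are being raised and lowered by $b$; everything else is bookkeeping.
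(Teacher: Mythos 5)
Your proposal is correct and follows essentially the same route as the paper: both treat $2b^{\alpha\beta}$ as an isomorphism $V^*\to V$, transport the product, and use the invariance identity $a^{\alpha\beta}_q b^{q\gamma}=a^{\gamma\beta}_q b^{q\alpha}$ together with commutativity to identify the transported structure constants with $c^\beta_{rs}$, after which associativity, commutativity and the unity statement follow from the isomorphism. The only difference is presentational: you spell out the map $\psi$ and the index computation in more detail than the paper's one-line calculation.
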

\begin{proof}
Symmetric non-degenerate bilinear form $2 b^{\alpha \beta}$ can be understood as invertible map $b: V^* \to V$. We get
\begin{equation*}
    \frac{1}{2} b_{sp} b_{rq} a^{pq}_m b^{m \beta} = \frac{1}{2} b_{sp} b_{rq} a^{\beta q}_m b^{mp} = \frac{1}{2} b_{rq} a^{\beta q}_m \delta^m_s = \frac{1}{2} b_{rq} a^{q \beta}_s = c^{\beta}_{rs}.
\end{equation*}
This implies that $c^\beta_{rs}$ is commutative associative algebra, isomorphic to the one defined by $a^{\alpha \beta}_s$ on $V^*$. It also implies the statement about the unity. The Lemma is proved.
\end{proof}

The next fact from linear algebra is well-known, but to keep our work self-contained we provide it with proof.

\begin{Lemma}\label{wwr}
Consider a real vector space $V$ of dimension $n$ and a non-degenerate linear operator $R: V \to V$. Then there exists a polynomial $p(t)$ (possibly with complex coefficients), such that for $L = p(R)$ one has $L^2 = R$.
\end{Lemma}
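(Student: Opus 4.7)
The plan is to work over $\mathbb C$ (which is allowed since the lemma permits complex coefficients) and exploit the structure of the polynomial algebra $\mathbb C[R]$. Since $R$ is non-degenerate, none of its eigenvalues vanish, and this is the only property of $R$ we will need besides its spectrum.

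First, I would let $m(t) = \prod_{i=1}^{r}(t-\lambda_i)^{m_i}$ be the minimal polynomial of $R$ over $\mathbb C$, with the $\lambda_i$ pairwise distinct and all nonzero. The evaluation map $\mathbb C[t] \to \mathrm{End}(V)$ sending $t \mapsto R$ has kernel $(m(t))$, so the subalgebra $\mathbb C[R] \subset \mathrm{End}(V)$ is isomorphic to $\mathbb C[t]/(m(t))$. By the Chinese Remainder Theorem,
\begin{equation*}
\mathbb C[t]/(m(t)) \;\cong\; \prod_{i=1}^{r} \mathbb C[t]/(t-\lambda_i)^{m_i},
\end{equation*}
so finding a polynomial $p(t)$ with $p(t)^2 \equiv t \pmod{m(t)}$ reduces to finding, for each $i$, a polynomial $q_i(t)$ such that $q_i(t)^2 \equiv t \pmod{(t-\lambda_i)^{m_i}}$; CRT then produces a single $p(t)$ with $p(t) \equiv q_i(t)$ modulo $(t-\lambda_i)^{m_i}$ for every $i$, and this $p$ automatically satisfies $p(t)^2 \equiv t \pmod{m(t)}$, hence $p(R)^2 = R$.

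The second step is the local computation on each factor $\mathbb C[t]/(t-\lambda_i)^{m_i}$. Write $t = \lambda_i + (t-\lambda_i)$ and choose any complex square root $\mu_i$ of $\lambda_i$. Then the formal binomial series
\begin{equation*}
\mu_i \sum_{j\ge 0} \binom{1/2}{j} \mu_i^{-2j}\,(t-\lambda_i)^{j}
\end{equation*}
squares to $t$ in the power-series sense. Since $(t-\lambda_i)^{m_i} \equiv 0$ in the local factor, only the first $m_i$ terms survive, and truncating gives a genuine polynomial $q_i(t)$ with $q_i(t)^2 \equiv t \pmod{(t-\lambda_i)^{m_i}}$. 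This is just the Jordan-block calculation $\sqrt{\lambda_i I + N} = \sqrt{\lambda_i}\sum_{j=0}^{m_i-1}\binom{1/2}{j}(N/\lambda_i)^{j}$, which works precisely because $\lambda_i\neq 0$ makes the expansion legitimate and nilpotency truncates it.

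There is no real obstacle here; the one place to be careful is merely invoking the non-degeneracy of $R$ at exactly the right spot, namely to guarantee $\lambda_i \neq 0$ so that the binomial expansion around $\lambda_i$ is defined. The CRT-patching, the ring isomorphism $\mathbb C[R]\cong \mathbb C[t]/(m(t))$, and the termination of the binomial series are all standard. Setting $L = p(R)$ then produces an operator that is simultaneously a polynomial in $R$ and satisfies $L^2 = R$, which is exactly what ``good square root'' means in the statement preceding Theorem~\ref{t3}.
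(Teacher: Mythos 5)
Your argument is correct: the reduction $\mathbb{C}[R]\cong\mathbb{C}[t]/(m(t))$, the Chinese Remainder splitting into the local factors $\mathbb{C}[t]/(t-\lambda_i)^{m_i}$, and the truncated binomial expansion of $\sqrt{t}$ around each $\lambda_i\neq 0$ (non-degeneracy entering exactly there) do combine to give $p(t)$ with $p(t)^2\equiv t \pmod{m(t)}$, hence $p(R)^2=R$; the only looseness, that for complex coefficients $p(R)$ acts on the complexification of $V$, is already present in the statement of the lemma itself and is harmless for its use in Theorem~\ref{t3}. Your route is, however, genuinely different from the paper's. The paper splits the operator multiplicatively via the Jordan--Chevalley decomposition $R=R_s+R_n=R_s(\operatorname{Id}+R_s^{-1}R_n)$, takes a Lagrange-interpolation square root of the semisimple factor, a truncated binomial root of the unipotent factor, and multiplies; this requires citing the nontrivial fact that $R_s$ and $R_n$ are themselves polynomials in $R$ without constant term (Humphreys), so that the final product is again a polynomial in $R$. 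You instead split the commutative algebra $\mathbb{C}[R]$ itself by CRT and solve the congruence $q^2\equiv t$ locally, so the ``polynomial in $R$'' property is automatic and no decomposition theorem for operators is needed; the price is phrasing everything ring-theoretically rather than in terms of a visible factorization of $R$ into commuting semisimple and unipotent square-rootable pieces. Both proofs share the same local engine (binomial series legitimized by a nonzero eigenvalue and truncated by nilpotency); the difference is purely in the global assembly, where your version is the more self-contained of the two.
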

\begin{proof}
First, assume that $R$ has no Jordan blocks, then one can take $p(t)$ to be the Lagrange polynomial for square roots of eigenvalues. Here the complex coefficients may appear, for example, for negative real eigenvalues.

Now assume that $R$ is in general form. Consider its Jordan-Chevalley decomposition $R = R_s + R_n$, where $R_s, R_n$ are semisimple and nilpotent parts respectively. Moreover, there exist polynomials $q_s(t)$ and $q_n(t)$ both without constant terms, such that $R_s = q_s(R), R_n = q_n(R)$ (Proposition p.17, \cite{hum}). We rewrite it as $R = R_s (\operatorname{Id} + R^{-1}_s R_n)$. We get
$$
N = R^{-1}_s R_n = \Big(q_s(R)\Big)^{-1} q_n(R).
$$
The inverse of a matrix $R^{-1}_s$ can be written as a polynomial, thus, we get that $N$ is a polynomial in $R$.

By earlier argument there exists a good root of $R_s$, which is expressed as a polynomial of $R_s$ and, thus, as a polynomial of $R$. We denote it as $R^{1/2}_s$. 

As $N$ is nilpotent $(\operatorname{Id} + N)^{-1/2}$ is a polynomial in $R$. Finally we take
$$
L = R^{1/2}_s (\operatorname{Id} + N)^{-1/2}.
$$
to be a good root of $R$ in general form. 
\end{proof}

\begin{Lemma}\label{lemm4}
Assume that $c^\beta_{ps}$ and $r^\beta_p$ are defined as in the statement of Theorem \ref{t3}. Consider $l^\beta_q$ to be the components of good square root of $r^\alpha_p$ and $\bar l^\beta_q$ the components of the inverse of the good square root. The identity
\begin{equation}\label{rim}
m^\beta_r c^r_{ps} = c^\beta_{pr} m^r_s = m^r_p c^\beta_{rs}
\end{equation}
holds for both $m^\alpha_p = l^\alpha_p$ and $m^\alpha_p = \bar l^\alpha_p$.
\end{Lemma}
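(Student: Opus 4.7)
Our plan is to first recast both identities as matrix statements that reduce to a single commutation relation, then establish that relation for $R$ via a direct Frobenius calculation, and finally transfer it to $l$ and $\bar l$ using the fact that they are polynomials in $R$.

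Set $(C_p)^\beta_s := c^\beta_{ps}$ (the matrix of left multiplication by $\eta_p$ in the $c$-algebra on $V$) and let $M$ denote the matrix with entries $m^\beta_s$. The first identity $m^\beta_r c^r_{ps}=c^\beta_{pr}m^r_s$ is exactly $MC_p=C_p M$. Using the commutativity $c^\beta_{rs}=c^\beta_{sr}$ from Lemma \ref{lemm3}, the remaining identity $c^\beta_{pr}m^r_s=m^r_p c^\beta_{rs}$ rewrites as $(C_p M)^\beta_s=(C_s M)^\beta_p$; granted the first identity, this further reduces to $(M C_p)^\beta_s=(M C_s)^\beta_p$, which is immediate from $c^q_{ps}=c^q_{sp}$. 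Thus it suffices to prove that $M$ commutes with every $C_p$.

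For $M=R$ this is the crux. Expand $(RC_p)^\beta_s=\frac{1}{2} h^{\beta m} b_{mq} c^q_{ps}$, use $c^q_{ps}=\frac{1}{2} b_{pr} a^{rq}_s$, and apply commutativity of $a$ in its upper indices to rewrite $b_{mq} a^{rq}_s=2 c^r_{ms}$, obtaining $(RC_p)^\beta_s=\frac{1}{2} h^{\beta m} b_{pr} c^r_{ms}$. Now invoke the Frobenius invariance of $h$ for the $a$-algebra, i.e.\ the third relation of \eqref{triple}: contracting $a^{\alpha\beta}_q h^{q\gamma}=a^{\gamma\beta}_q h^{q\alpha}$ with $b_{\alpha'\alpha}$ yields the auxiliary identity $c^\beta_{pq} h^{qm}=a^{m\beta}_q r^q_p$. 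Substituting this (after relabeling indices and using commutativity of $a$ and $c$) into the previous display collapses it to $c^\beta_{pm} r^m_s=(C_p R)^\beta_s$, so $RC_p=C_p R$.

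Once $RC_p=C_p R$ is established, the extensions to $l$ and $\bar l$ are automatic. By definition of a \emph{good} square root, $l=p(R)$ for some polynomial $p$, so $l$ lies in the commutative subalgebra generated by $R$ and the identity, hence commutes with every matrix that $R$ commutes with; in particular $l C_p = C_p l$. Since $l^2=R$ is non-degenerate, $l$ is invertible, and multiplying $l C_p = C_p l$ on both sides by $l^{-1}=\bar l$ gives $\bar l C_p=C_p \bar l$.

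The main obstacle will be finding the right sequence of contractions in the middle step: among the three invariance conditions in \eqref{triple}, it is precisely the one for $h$ that forces $R$ into the centralizer of the multiplication operators of the $c$-algebra. The underlying conceptual content is that $R$ is itself a left multiplication in the $c$-algebra — literally true when the algebra has a unit, and true \emph{as an operator} in general by the Frobenius computation above — and the goodness of the square root is exactly what lets this fact transfer from $R$ to $l$ and thence to $\bar l$.
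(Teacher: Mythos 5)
Your proof is correct and takes essentially the same route as the paper: establish the identity for $r^\alpha_p$ by a direct computation with the Frobenius relations \eqref{triple}, then transfer it to $l$ and $\bar l$ using that a good square root is a polynomial in $r$. The only cosmetic differences are your repackaging of \eqref{rim} as commutation with the left-multiplication matrices $C_p$ (with the second equality absorbed into the symmetry $c^\beta_{ps}=c^\beta_{sp}$) and your treatment of $\bar l$ by multiplying $lC_p=C_pl$ by $l^{-1}$ instead of invoking that the inverse of a non-degenerate matrix is a polynomial in that matrix.
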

\begin{proof}
First, let us check that \eqref{rim} holds for $m^\alpha_p = r^\alpha_p$. The constants $a^{\alpha \beta}_s, b^{\alpha \beta}, h^{\alpha \beta}$ are the ones defined in the statement of Theorem \ref{t3}. By definition of Frobenius forms we get
\begin{equation*}
    \begin{aligned}
    h^{\alpha s} a_s^{\beta \gamma} = h^{\alpha s} b_{sm} b^{mr} a^{\beta \gamma}_r = 2 r^\alpha_m b^{\gamma r} a^{\beta m}_r
    \end{aligned}
\end{equation*}
Lowering index $\gamma$ with $b$ and renaming the rest of the indices we arrive to the identity
$$
h^{\alpha s} c^\beta_{ps} = 2 r^\alpha_m a^{\beta m}_s.
$$
At the same time, as $h^{\alpha \beta}$ is invariant and $a^{\alpha \beta}_s$ is commutative, we get
$$
2 r^\alpha_m a^{\beta m}_s = h^{\alpha s} c^\beta_{ps} = h^{\beta s} c^{\alpha}_{ps}.
$$
Lowering $\beta$ with $b$, and, again, renaming the indices we get
$$
2 r^{\alpha}_m c^m_{ps} = c^{\alpha}_{pm} r^m_s
$$
We know that $c^\beta_{ps} = c^\beta_{sp}$ (Lemma \ref{lemm3}), so $r^\alpha_p$ satisfies \eqref{rim}.

Now recall that if identity \eqref{rim} holds for $r^\alpha_p$, then it holds for all powers of $r^\alpha_p$ and, thus, for all polynomials of the operator. As $l$ is a good root, then the identity \eqref{rim} holds for $l^\alpha_p$. The inverse of a non-degenerate matrix is a polynomial of this matrix, thus, by the same argument the identity \eqref{rim} holds for $\bar l^\alpha_p$.
\end{proof}

\begin{Lemma}\label{lemm5}
Consider a differential polynomial $\mathcal h$ and assume that $D^j(\mathcal h) = 0$ for some $j \geq 1$. Then $\mathcal h$ is a constant.
\end{Lemma}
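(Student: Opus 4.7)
The plan is to reduce to the case $j=1$ and then use the grading by the order of jet variables. First, I would note that in the application to Lemma \ref{lemm1} (and implicit throughout) the coefficients of the differential polynomial are assumed independent of $x$; otherwise the trivial counterexample $\mathcal h = x$ with $D^2(\mathcal h)=0$ would break the claim. So I work under that assumption.

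The base case $j=1$ is handled by a highest-weight argument. Suppose $D(\mathcal h)=0$, and let $N$ be the largest integer such that $u_{x^N}$ actually appears in $\mathcal h$. If $N\ge 1$, then
\[
D(\mathcal h) \;=\; \sum_{k\ge 0} u_{x^{k+1}}\,\frac{\partial \mathcal h}{\partial u_{x^k}},
\]
and the coefficient of the monomial $u_{x^{N+1}}$ in $D(\mathcal h)$ is exactly $\partial \mathcal h/\partial u_{x^N}$, which is a nonzero differential polynomial (since $u_{x^N}$ genuinely appears) not involving $u_{x^{N+1}}$. This gives a nontrivial term in $D(\mathcal h)$, contradicting $D(\mathcal h)=0$. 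Hence $\mathcal h=f(u)$ depends only on $u$, and then $D(\mathcal h)=f'(u)u_x=0$ forces $f'\equiv 0$, so $\mathcal h$ is constant.

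For the inductive step I would observe a simple auxiliary fact: if $\mathcal f$ is a differential polynomial with $x$-independent coefficients and $D(\mathcal f)$ equals a constant $c$, then $c=0$. Indeed every monomial of $D(\mathcal f)$ contains at least one factor $u_{x^k}$ with $k\ge 1$, so the $u$-only part of $D(\mathcal f)$ is zero, forcing $c=0$. Now suppose by induction the lemma holds for some $j\ge 1$, and assume $D^{j+1}(\mathcal h)=0$. Then $D^{j}(D\mathcal h)=0$, so by the inductive hypothesis $D(\mathcal h)$ is a constant, which by the auxiliary fact must be zero; the base case then yields $\mathcal h$ constant.

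The only real subtlety is the handling of explicit $x$-dependence, which is why I flag it at the start; once that is settled the argument is a straightforward leading-term analysis together with induction on $j$. No delicate estimates or nontrivial algebra are required.
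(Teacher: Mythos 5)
Your proof is correct, and the caveat you raise at the start is a genuine one: with the paper's definition, coefficients may depend on $x$, and then the statement already fails for $\mathcal h = x$; in every application (e.g.\ to $T^{\alpha\beta\gamma}$ in Lemma \ref{lemm1}) the coefficients depend on the field variables only, which is the setting your argument (and, implicitly, the paper's) uses. Your route is different from the paper's. You reduce to $j=1$ by induction, using the observation that a total $x$-derivative has no derivative-free part and so cannot equal a nonzero constant, and you settle $j=1$ by a highest-jet-order argument: the coefficient of $u_{x^{N+1}}$ in $D(\mathcal h)$ is $\partial \mathcal h/\partial u_{x^N}$, which cannot vanish identically if $u_{x^N}$ occurs. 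The paper instead attacks $D^j$ directly, grading by the \emph{number} of derivative factors $u^\alpha_{x^m}$, $m\ge 1$ (its ``algebraic degree''): it isolates the lowest nonzero component $\mathcal h_{(i)}$ and argues that applying $D$ only shifts derivative indices in the degree-$i$ part while preserving its functional coefficients, so $D^j(\mathcal h)=0$ forces $\mathcal h_{(i)}=0$. Your version makes the key injectivity step explicit (the paper's ``applying $j$ times $D$'' tacitly needs exactly the sort of leading-term analysis you carry out at $j=1$), at the modest cost of the auxiliary fact about constants; both are elementary leading/lowest-term analyses. One cosmetic point: the lemma is used for multicomponent densities, so write $D(\mathcal h)=\sum_{\alpha,k} u^\alpha_{x^{k+1}}\,\partial\mathcal h/\partial u^\alpha_{x^k}$ and take $N$ to be the top derivative order over all components $\alpha$; the argument goes through verbatim.
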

\begin{proof}
It is enough to prove this for a homogeneous differential polynomial of degree $m$. Let us order the terms with respect to "algebraic degree", that is the number of symbols $u^\alpha_{x^j}$ for $j \geq 1$. 

We denote the decomposition of $\mathcal h$ into the sum of homogeneous (with respect to this ordering) terms as $\mathcal h = \mathcal h_{(i)} + \dots $, where $1 \leq i \leq m$ and $\mathcal h_{(i)} \neq 0$. By construction $D (\mathcal h_{(i)}) = \bar{\mathcal h}_{(i)} + \bar{\mathcal h}_{(i + 1)}$ and functional coefficients of $\bar{\mathcal h}_{(i)}$ coincide with functional coefficients of $\mathcal h_{(i)}$.

Applying $j$ times $D$ we get that vanishing of the result implies vanishing of the term of order $(i)$. This, in turn, means, that $\mathcal h_{(i)} = 0$. This contradiction completes the proof. 
\end{proof}

Now let us proceed to the proof of the Theorem.

First, let us discuss the solvability of \eqref{rq2}. The constants $c^\beta_{ps}$ are symmetric in lower indices (Lemma \ref{lemm3}). From \eqref{rq2}, the formulas for the components $\mathcal v_i^\alpha$ of series $\mathrm v^\alpha$ are
\begin{equation}\label{inv}
\begin{aligned}
u^\beta & = \mathcal v_1^{\beta} + \frac{1}{2} c^\beta_{ps} \mathcal v_1^p \mathcal v_1^s, \\
\Big(\delta^\beta_q + c^\beta_{qs} \mathcal v^s_1 \Big) \mathcal v^q_i & = - \frac{1}{2} \sum \limits_{j = 2}^{i - 1} c^\beta_{ps} \mathcal v_j^p \mathcal v_{i - j}^s + l^\beta_s (\mathcal v^s_{i - 1})_x, \quad i \geq 2
\end{aligned}    
\end{equation}
The first formula is actually Balinskii-Novikov formula from \cite{balnov}. It is written in the form $u^\alpha = F^\alpha(\mathcal v^1, \dots, \mathcal v^n)$. At the point $u^\alpha = 0, \mathcal v^\beta_1 = 0$ we have $\pd{F^\alpha}{\mathcal v^q} = \delta^\alpha_q$, thus, by the implicit function theorem in a neighbourhood of this point the solutions $\mathcal v^\alpha_1(u)$ exist.

For $i \geq 2$ for sufficiently small values of $\mathcal v_1^\alpha$ the operator $\delta^\beta_q + \frac{1}{2} c^\beta_{qs} \mathcal v^s_1$ on the l.h.s. of \eqref{inv} is invertible. Thus, the solution exists and each series is uniquely defined modulo the choice of solution of Balinskii-Novikov formula.

Fix two discs $\mathrm M^n$ and $\mathrm N^n$ with coordinates $u^1, \dots, u^n$ and $\mathrm v^1, \dots, \mathrm v^n$. We denote the r.h.s. of formula \ref{rq2} as $\phi (\mathrm v) = (\phi^1(\mathrm v), \dots, \phi^n(\mathrm v))$. For arbitrary element $c \in \Omega \mathrm N^n$ we have $\phi \circ c : \mathbb R \to \mathrm M^n$. That is $\phi$ defines a map $\phi: \Omega \mathrm N^n \to \Omega \mathrm M^n$. This induces the dual map $\phi^*$ on the functionals: for a given density $\mathcal h (u, u_x, \dots)$ we have
$$
\phi^* \mathcal H = \int \limits^{2 \pi}_{0} \mathcal h(\phi(\mathrm v), \phi_x(\mathrm v), \dots) \ddd x.
$$
The density $\mathcal h(\phi(\mathrm v))$ is not a differential polynomial in general. It has natural description: it is polynomial in $\mathrm v^\alpha_{x^j}, j = 2, \dots$ with coefficients being a smooth functions of $2n$ variables $\mathrm v^\alpha, \mathrm v^\alpha_x$. So in this section the polynomiality is not assumed.

The straightforward calculation of variational derivative $\vpd{(\phi^* \mathcal H)}{\mathrm v^s}$ yields: for arbitrary $r \in \Omega \mathrm N^n$ we have
$$
\begin{aligned}
& \pd{}{\epsilon} \int \limits^{2 \pi}_{0} \mathcal h\big(u(\mathrm v + \epsilon \mathrm r, \mathrm v_x + \epsilon \mathrm r_x)\big) \ddd x  = \int \limits^{2 \pi}_{0} \sum \limits_{j = 0} \Bigg( \pd{\mathcal h}{u^q_{x^j}} D^j\Big(\pd{u^q}{\mathrm v^s} r^s \Big) + \pd{\mathcal h}{u^q_{x^j}} D^j\Big(\pd{u^q}{\mathrm v_x^s} r_x^s \Big)\Bigg) \ddd x + \dots  = \\
= & \int \limits^{2 \pi}_{0} \sum \limits_{j = 0} \Bigg( \pd{\mathcal h}{u^q_{x^j}} D^j\Big(\big(\delta^q_s +   c^q_{ps} \mathrm v^p\big) r^s \Big) + \pd{\mathcal h}{u^q_{x^j}} D^{j + 1} (l^q_s r^s) \Bigg)\ddd x + \dots  = \\
= & \int \limits^{2 \pi}_{0} \Bigg( \vpd{\mathcal h}{u^q} \big(\delta^q_s + c^q_{ps} \mathrm v^p\big) r^s + l^q_s \vpd{\mathcal h}{u^q} r^s_x \Bigg) \ddd x + \dots = \int \limits^{2 \pi}_{0} \Bigg( \vpd{\mathcal h}{u^q} \big(\delta^q_s + c^q_{ps} \mathrm v^p\big) - l^q_sD\Big(\vpd{\mathcal h}{u^q}\Big) \Bigg) r^s \ddd x + \dots 
\end{aligned}
$$
The calculation obviously holds for the densities $\mathrm v = \mathcal v_1 + \mathcal v_2 + \dots$, where $\mathcal v_i$ is either zero ore differential polynomial of degree $i - 1$ in $u^\alpha_{x^j}, j \geq 0$. For functional $\mathrm V^\alpha = \int^{2 \pi}_{0} \mathrm v^\alpha \ddd x$, where $\mathrm v^\alpha$ is one of the solutions of \eqref{rq2}, we get identity
\begin{equation}\label{prot1}
    \vpd{\mathrm V^\alpha}{\mathrm v^s} = \delta^\alpha_s = \Big(\delta^q_s + c^q_{ps} \mathrm v^p\Big) \vpd{\mathrm V^\alpha}{u^q} - l^q_s D \Big( \vpd{\mathrm V^\alpha}{u^q}\Big).
\end{equation}
Recall, that identity \eqref{rq2} provides a recursion formula for calculation of the components $\mathcal v_i$ of the series $\mathrm v = \mathcal v_1 + \mathcal v_2 + \mathcal v_3 + \dots$. The identity \eqref{prot1}, in turn, provides the recursion formulas for calculation of the components $\vpd{\mathcal v^\alpha_i}{u^s}$ of the series $\vpd{\mathrm v^\alpha}{u^s} = \vpd{\mathcal v^\alpha_1}{u^s} + \vpd{\mathcal v^\alpha_2}{u^s} + \dots$. 

Now we are going to show that together these two identities imply the first statement of the theorem by a straightforward computation. Recall that $\bar l^\beta_s$ stands for the inverse of the operator $l^\beta_s$. We rewrite \eqref{rq2} and \eqref{prot1} as
\begin{equation}\label{prot2}
\begin{aligned}
& D\Big( \vpd{\mathrm V^\alpha}{u^s}\Big) =  \Big(\bar l^m_s + \bar l^r_p c^m_{pr} \mathrm v^p\Big) \vpd{\mathrm V^\alpha}{u^m} - \bar l^\alpha_s, \\
& \mathrm v^\alpha_x = \bar l^\alpha_m u^m - \bar l^\alpha_m \mathrm v^m - \frac{1}{2} \bar l^\alpha_m c^m_{ps} \mathrm v^p \mathrm v^s.
\end{aligned}
\end{equation}
Applying $D$ to both sides of \eqref{prot1} we get
\begin{equation*}
    0 = c^q_{ps} \mathrm v^p_x \vpd{\mathrm V^\alpha}{u^q} + \Big(\delta^q_s + c^q_{ps} \mathrm v^p\Big) D \Big(\vpd{\mathrm V^\alpha}{u^q}\Big) - l^m_s D^2 \Big( \vpd{\mathrm V^\alpha}{u^m}\Big).
\end{equation*}
Substituting \eqref{prot2} into previous equation yields
\begin{equation*}
    0 = c^e_{sq} \bar l^q_m \Big(u^m - \mathrm v^m - \frac{1}{2}c^m_{pr} \mathrm v^p \mathrm v^r\Big) \vpd{\mathrm V^\alpha}{u^e} + \Big(\delta^q_s + c^q_{ps} \mathrm v^p\Big) \bar l^m_q\Big(\delta^e_m + c^e_{rm} \mathrm v^r\Big) \vpd{\mathrm V^\alpha}{u^e} - \bar l^\alpha_s - \bar l^\alpha_m c^m_{ps} \mathrm v^p - l^m_s D^2 \Big( \vpd{\mathrm V^\alpha}{u^m}\Big).
\end{equation*}
We multiply both sides by $l^\alpha_q$ and apply Lemma \ref{lemm4}. This yields
\begin{equation*}
0 = c^q_{sp} u^p \vpd{\mathrm V^\alpha}{u^q} + \Big(\delta^m_s + c^m_{ps} \mathrm v^p + c^q_{ps} c^m_{qr} \mathrm v^p \mathrm v^r - \frac{1}{2} c^m_{sq} c^q_{pr} \mathrm v^p \mathrm v^r \Big) \vpd{\mathrm V^\alpha}{u^m} - \delta^\alpha_s - c^\alpha_{ps} \mathrm v^p - r^m_s D^2 \Big( \vpd{\mathrm V^\alpha}{u^m}\Big).
\end{equation*}
Using identity $c^q_{ps} c_{qr}^m = c_{sq}^m c^q_{pr}$ from Lemma \ref{lemm3}, we finally get
$$
0 = c^q_{sp} u^p \vpd{\mathrm V^\alpha}{u^q} + \vpd{\mathrm V^\alpha}{u^s} + \Big( c^m_{ps} \mathrm v^p + \frac{1}{2} c^m_{sq} c^q_{pr} \mathrm v^p \mathrm v^r \Big) \vpd{\mathrm V^\alpha}{u^m} - \delta^\alpha_s - c^\alpha_{ps} \mathrm v^p - r^m_s D^2 \Big( \vpd{\mathrm V^\alpha}{u^m}\Big)
$$
Now we again apply $D$ to both sides. This yields
$$
\begin{aligned}
    0 & = c^q_{sp} u^p_x \vpd{\mathrm V^\alpha}{u^q} + c^q_{sp} u^p D \Big(\vpd{\mathrm V^\alpha}{u^q}\Big) + D\Big(\vpd{\mathrm V^\alpha}{u^s}\Big) + \Big( c^m_{ps} \mathrm v_x^p + c^m_{sq} c^q_{pr} \mathrm v_x^p \mathrm v^r \Big) \vpd{\mathrm V^\alpha}{u^m} + \\
    & + \Big( c^m_{ps} \mathrm v^p + \frac{1}{2} c^m_{sq} c^q_{pr} \mathrm v^p \mathrm v^r \Big) D\Big(\vpd{\mathrm V^\alpha}{u^m}\Big) - c^\alpha_{ps} \mathrm v^p_x - r^m_s D^3 \Big( \vpd{\mathrm V^\alpha}{u^m}\Big)
\end{aligned}
$$
Here we have again used identities for $c^\beta_{ps}$ from Lemma \ref{lemm3}. Regrouping terms we get
$$
\begin{aligned}
0 & = c^q_{sp} u^p_x \vpd{\mathrm V^\alpha}{u^q} + c^q_{sp} u^p D \Big(\vpd{\mathrm V^\alpha}{u^q}\Big) + D\Big(\vpd{\mathrm V^\alpha}{u^s}\Big) + c^r_{sp} \mathrm v^p_x \Bigg( \Big( \delta^m_r + c^m_{rq} \mathrm v^q \Big) \vpd{\mathrm V^\alpha}{u^m} - \delta^\alpha_r \Bigg) + \\
    & + \Big( c^m_{ps} \mathrm v^p + \frac{1}{2} c^m_{sq} c^q_{pr} \mathrm v^p \mathrm v^r \Big) D\Big(\vpd{\mathrm V^\alpha}{u^m}\Big) - r^m_s D^3 \Big( \vpd{\mathrm V^\alpha}{u^m}\Big)
\end{aligned}
$$
From the first formula of \eqref{prot2} we get that the expression in big brackets equals to $l_s^m D\big( \vpd{\mathrm V^\alpha}{u^m}\big)$. Regrouping coefficients and applying \eqref{rq2} we get
$$
0 = c^q_{sp} u^p_x \vpd{\mathrm V^\alpha}{u^q} + 2 c^q_{sp} u^p D \Big(\vpd{\mathrm V^\alpha}{u^q}\Big) + D\Big(\vpd{\mathrm V^\alpha}{u^s}\Big) - r^m_s D^3 \Big( \vpd{\mathrm V^\alpha}{u^m}\Big).
$$
Raising the indices by $2b^{\alpha q}$ we get 
\begin{equation}\label{last}
2 ( b^{\alpha q} +  a^{\alpha q}_s u^s) D \Big(\vpd{\mathrm V^\alpha}{u^q}\Big) + a^{\alpha \beta}_s u^s_x \vpd{\mathrm V^\alpha}{u^q} = h^{\alpha q} D^3 \Big( \vpd{\mathrm V^\alpha}{u^q}\Big)    
\end{equation}
Now recall that if $\mathcal v_i$ is a differential polynomial of degree $i - 1$, then its partial variational derivative $\vpd{\mathcal v_i}{u^\alpha}$ is also a differential polynomial of degree $i - 1$. At the same time operator $D$ increases the degree of differential polynomial by one. This observation implies that after substituting $\mathrm v^\alpha = \mathcal v^\alpha_1 + \mathcal v^\alpha_2 + \dots$ into \eqref{last} we get two Harry Dym hierarchies
$$
\begin{array}{l}
     \mathcal A^{\beta q} \vpd{\mathcal V^\alpha_1}{u^q} = 0,  \\
     \mathcal A^{\beta q} \vpd{\mathcal V^\alpha_{2i + 1}}{u^q} = \mathcal B^{\beta q} \vpd{\mathcal V^\alpha_{2i + 1}}{u^q}
\end{array} \quad \text{  and  } \quad \begin{array}{l}
     \mathcal A^{\beta q} \vpd{\mathcal V^\alpha_2}{u^q} = 0,  \\
     \mathcal A^{\beta q} \vpd{\mathcal V^\alpha_{2i + 2}}{u^q} = \mathcal B^{\beta q} \vpd{\mathcal V^\alpha_{2i + 1}}{u^q}
\end{array} \quad \text{i = 1, \dots }.
$$
Here $\mathcal V^\alpha_i = \int^{2 \pi}_{0} \mathcal v_i \ddd x$. After renaming $\mathcal h_i = \mathcal v_{2i - 1}$ for $i = 1, \dots$ the first hierarchy produces the first statement of Theorem \ref{t3}. We will need the following Lemma.

As for the second hierarchy taking flat coordinates for $\mathcal A_g$, we get that by Lemma \ref{lemm5} the $\vpd{\mathcal V_i}{u^q} = 0$ for all even $i$. Thus, the second hierarchy is trivial and $\mathcal v_i$ for even $i$ are total derivatives.

Recall that $\mathcal A_g$ and $\mathcal B_h$ are compatible Poisson brackets. The constructed solutions provide the Lenard-Magri chains for corresponding bracket. Thus, the second statement of Theorem \ref{t3} follows from general bihamiltonian formalism for such brackets.

Now assume that we start Harry Dym hierarchy with $\mathcal V^\alpha_1 = 0$. The second equation of the hierarchy becomes
$$
\mathcal B^{\beta q} D^3\Big( \vpd{\mathcal H^\alpha_2}{u^q}\Big) = 0.
$$
By Lemma \ref{lemm5} this implies that in the case of homogeneous differential polynomial densities, the only solution is $\mathcal H^\alpha_2 = 0$. Applying the same argument, we get that the entire hierarchy is zero. Thus, the third statement if proved.

To prove the last statement of the theorem we first recall that by Lemma \ref{lemm3}, the algebra $c^\beta_{ps}$ has unity $f = (f^1, \dots, f^n)$. Adding $\frac{1}{2} f^\alpha$ to both sides of \eqref{rq2} we get
$$
u^\alpha + \frac{1}{2} f^\alpha = \frac{1}{2} f^\alpha + \mathrm v^\alpha + \frac{1}{2} c^\beta_{ps} \mathrm v^p \mathrm v^s + l^\beta_s \mathrm v^s_x = \frac{1}{2} c^\beta_{ps} (\mathrm v^p + f^p) (\mathrm v^s + f^s) + l^\beta_s (\mathrm v^s + f^s)_x.
$$
At the same time we get the following identity
$$
\frac{1}{2} a^{\beta \alpha}_q f^q = b^{\alpha m} c^\beta_{mq} f^q = b^{\alpha \beta}.
$$
Thus, the formula for $\mathcal A^{\alpha q}_g$ takes form
$$
\mathcal A^{\beta q}_g = 2 (b^{\beta q} + a^{\beta q}_s u^s) D + a^{\beta q}_s u^s_x = 2 a^{\beta q}_s (u^s + \frac{1}{2} f^s) D + a^{\beta q}_s (u^s + \frac{1}{2} f^s)_x.
$$
Shifting $u^\beta \to u^\beta + \frac{1}{2} f^\beta$ we get the fourth statement of the Theorem.


\section{Appendix: some facts about commutative associative algebras with Frobenius forms}

In this section we collect some basic facts about commutative associative algebras and their invariant forms. We always assume that algebra is over $\mathbb R$ or $\mathbb C$. If the field is not specified, the result holds for both.

\begin{Proposition}\label{pp1}
Consider a commutative associative algebra $\mathfrak a$ and arbitrary $m \in \mathfrak a^*$. Then $h_m (\xi, \eta) = m(\xi \star \eta)$ is an invariant form. We call this form exact.
\end{Proposition}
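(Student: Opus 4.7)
The plan is to verify directly the two defining properties of an invariant symmetric bilinear form, namely symmetry and the invariance identity \eqref{i1}, using only the commutativity and associativity of $\star$ and the linearity of $m$. Bilinearity of $h_m$ is immediate from the bilinearity of $\star$ combined with the linearity of $m$, so the only substantive checks are symmetry and invariance.

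For symmetry I would simply write $h_m(\xi,\eta) = m(\xi \star \eta) = m(\eta \star \xi) = h_m(\eta,\xi)$, where the middle equality uses commutativity. For invariance I would compute
\[
h_m(\xi \star \eta, \zeta) \;=\; m\bigl((\xi \star \eta) \star \zeta\bigr) \;=\; m\bigl(\xi \star (\eta \star \zeta)\bigr) \;=\; h_m(\xi, \eta \star \zeta),
\]
with the middle equality being exactly the associativity axiom.

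There is essentially no obstacle here: the statement is a direct unpacking of the definitions, and the commutativity/associativity of $\mathfrak{a}$ translate one-to-one into symmetry and invariance of $h_m$. The only thing worth emphasizing afterwards, as motivation for the terminology "exact," is that every linear functional $m \in \mathfrak{a}^*$ produces such a form, and conversely one should note (though it is not asserted in the statement) that every invariant form arises this way when $\mathfrak{a}$ has a unity, via $m(\xi) = h(\xi, \mathbf{1})$ — but this remark would belong to a subsequent proposition rather than to the proof of Proposition~\ref{pp1} itself.
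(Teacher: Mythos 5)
Your proof is correct and follows essentially the same route as the paper: the invariance identity $h_m(\xi\star\eta,\zeta)=m((\xi\star\eta)\star\zeta)=m(\xi\star(\eta\star\zeta))=h_m(\xi,\eta\star\zeta)$ is exactly the paper's computation, and your additional explicit checks of symmetry and bilinearity are routine observations the paper leaves implicit.
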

\begin{proof}
We have
$$
h_m(\xi \star \eta, \zeta) = m((\xi \star \eta)\star \zeta) = m(\xi \star (\eta \star \zeta)) = h_m(\xi, \eta \star \zeta).
$$
The Proposition is proved.
\end{proof}

Among others we deal with commutative associative algebras with unity

\begin{Proposition}\label{pp2}
Consider a commutative associative algebra $\mathfrak a$. Then
\begin{enumerate}
    \item If $\mathfrak a$ has unity, then every invariant form is exact
    \item If there exists a non-degenerate exact invariant form, then $\mathfrak a$ has unity
\end{enumerate}
\end{Proposition}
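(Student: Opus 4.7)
My plan is to handle the two implications through the canonical identification between $\mathfrak a$ and $\mathfrak a^*$ that a non-degenerate invariant form provides, and through direct use of the defining invariance property.

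For statement (1), suppose $\mathfrak a$ has unity $e$ and let $h$ be an arbitrary invariant form. I would simply define $m \in \mathfrak a^*$ by the formula $m(\xi) = h(e, \xi)$ and then verify exactness directly: using the invariance of $h$ and the unity property,
\begin{equation*}
h(\xi, \eta) = h(e \star \xi, \eta) = h(e, \xi \star \eta) = m(\xi \star \eta) = h_m(\xi, \eta),
\end{equation*}
so $h = h_m$. Nothing more is needed here.

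For statement (2), suppose $h_m$ is exact and non-degenerate for some $m \in \mathfrak a^*$. Non-degeneracy means the linear map $R \colon \mathfrak a \to \mathfrak a^*$ sending $\xi \mapsto h_m(\xi, \cdot)$ is an isomorphism. I would let $e \in \mathfrak a$ be the unique preimage of $m$ under $R$, so that $h_m(e, \eta) = m(\eta)$ for every $\eta \in \mathfrak a$. To show $e$ is the unity, I would fix an arbitrary $\xi \in \mathfrak a$ and compute, for every $\eta$,
\begin{equation*}
h_m(e \star \xi, \eta) = h_m(e, \xi \star \eta) = m(\xi \star \eta) = h_m(\xi, \eta),
\end{equation*}
using invariance in the first equality and the definition of $h_m$ in the third. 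Non-degeneracy of $h_m$ then forces $e \star \xi = \xi$, and commutativity of $\mathfrak a$ gives $\xi \star e = \xi$ as well.

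There is no real obstacle: both implications are immediate once one sets up the pairing between $\mathfrak a$ and $\mathfrak a^*$ and pushes invariance through. The only step worth flagging is the choice $e = R^{-1}(m)$ in the second part, which is exactly the place where non-degeneracy of $h_m$ (as opposed to merely the exactness assumption) is used; without non-degeneracy the covector $m$ would not necessarily lie in the image of $R$, and no candidate unity would be produced.
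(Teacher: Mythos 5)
Your proof is correct and follows essentially the same route as the paper: for (1) you take $m(\cdot)=h(e,\cdot)$ and push invariance through the unity, and for (2) you take the candidate unity to be the preimage of $m$ under the isomorphism $\mathfrak a\to\mathfrak a^*$ induced by $h_m$, then use invariance plus non-degeneracy to conclude it acts as the identity. No issues.
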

\begin{proof}
First, assume that $h$ is an invariant form for commutative associative algebra $\mathfrak a$ with unity $e$. Define $m \in \mathfrak a^*$ as $m(\eta) = h(e, \eta)$. We have
$$
h(\xi, \eta) = h(e, \xi \star \eta) = h_m (\xi, \eta).
$$
The first part of Proposition is proved. Now proceed to the second part. Assume that $h_m$ is non-degenerate exact invariant form for $m \in \mathfrak a^*$. It can be treated as map $h_m: \mathfrak a \to \mathfrak a^*$. We write 
\begin{equation}\label{mm1}
h_m (\xi, \zeta) = m (\xi \star \zeta) = h_m (\xi \star \zeta, h_m^{-1}m) = h_m(\xi \star h_m^{-1}m, \zeta).    
\end{equation}
Denote $\eta = h_m^{-1}m$ and consider the operator $R_{\eta} \xi = \xi \star \eta$. The formula \eqref{mm1} takes form
$$
h_m(\xi, \eta) = h_m(R_{\eta} \xi, \zeta)
$$
By construction $R_{\eta} = h_m^{-1} h_m = \operatorname{Id}$ and $\eta$ is unity. The Proposition is proved.
\end{proof}

The commutative associative algebra $\mathfrak a$ is called reducible if it can be written as $\mathfrak a = \mathfrak a' \oplus \mathfrak a''$ for some commutative associative algebras $\mathfrak a', \mathfrak a''$, and irreducible otherwise. By definition, every commutative associative algebra is a sum of irreducible algebras.

\begin{Proposition}\label{pp3}
Assume that a commutative associative algebra $\mathfrak a$ is reducible, that is
\begin{equation}\label{dec}
    \mathfrak a = \mathfrak a' \oplus \mathfrak a'',
\end{equation}
where $\mathfrak a', \mathfrak a''$ are both commutative associative algebras. If $\mathfrak a$ has unity and invariant non-degenerate form $h$, then both $\mathfrak a', \mathfrak a''$ have unities, decomposition \eqref{dec} is orthogonal with respect to $h$ and restrictions of $h$ on each summand are non-degenerate invariant forms.
\end{Proposition}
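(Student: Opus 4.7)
\smallskip

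The plan is to exploit the unity $e$ of $\mathfrak{a}$ in combination with the defining property of a direct sum of algebras, namely $\mathfrak{a}' \star \mathfrak{a}'' = 0$. First I would decompose $e = e' + e''$ with $e' \in \mathfrak{a}'$, $e'' \in \mathfrak{a}''$, and for an arbitrary $\xi' \in \mathfrak{a}'$ write $\xi' = e \star \xi' = e' \star \xi' + e'' \star \xi'$. Since $e'' \star \xi' \in \mathfrak{a}' \star \mathfrak{a}'' = 0$, we get $e' \star \xi' = \xi'$, so $e'$ is a unity for $\mathfrak{a}'$. The argument for $e''$ is symmetric.

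Next I would derive orthogonality of the decomposition with respect to $h$. For any $\xi' \in \mathfrak{a}'$ and $\eta'' \in \mathfrak{a}''$, using that $\eta'' = e'' \star \eta''$ and invariance of $h$, I compute
\begin{equation*}
h(\xi', \eta'') = h(\xi', e'' \star \eta'') = h(\xi' \star e'', \eta'') = h(0, \eta'') = 0,
\end{equation*}
where the cross-product $\xi' \star e''$ vanishes by the direct-sum property.

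Non-degeneracy of the restrictions then follows almost immediately: suppose $\xi' \in \mathfrak{a}'$ satisfies $h(\xi', \eta') = 0$ for all $\eta' \in \mathfrak{a}'$. By the orthogonality just established, $h(\xi', \eta'') = 0$ for all $\eta'' \in \mathfrak{a}''$ as well, hence $h(\xi', \eta) = 0$ for every $\eta \in \mathfrak{a}$. The non-degeneracy of $h$ on $\mathfrak{a}$ forces $\xi' = 0$. The same argument works on $\mathfrak{a}''$. Invariance of the restrictions is automatic since $\mathfrak{a}', \mathfrak{a}''$ are closed under multiplication, so invariance of $h$ on $\mathfrak{a}$ specializes to invariance on each summand.

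No step looks genuinely difficult; the only real care needed is the conventional one, namely making explicit at the outset that the direct-sum hypothesis $\mathfrak{a} = \mathfrak{a}' \oplus \mathfrak{a}''$ of algebras is stronger than the vector-space direct sum and entails $\mathfrak{a}' \star \mathfrak{a}'' = 0$. Once this is stated, the orthogonality computation is the only place where invariance of $h$ is actually used, and it is the lynchpin from which both the existence of the local unities (already obtained) and the non-degeneracy of the restrictions combine into the claim.
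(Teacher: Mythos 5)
Your proposal is correct and follows essentially the same route as the paper: decompose the unity as $e = e' + e''$, use invariance of $h$ together with $\mathfrak a' \star \mathfrak a'' = 0$ to get orthogonality of the summands, and deduce non-degeneracy of the restrictions from non-degeneracy of $h$. The only cosmetic difference is that you insert $e''$ in the orthogonality computation where the paper uses $e$; the argument is the same.
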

\begin{proof}
Denote unity in $\mathfrak a$ as $e$. Consider the decomposition $e = e' + e''$ of the unity with respect to \eqref{dec}. By construction both $e', e''$ are unities for $\mathfrak a', \mathfrak a''$ respectively.

For arbitrary $\xi \in \mathfrak a, \eta \in \mathfrak a''$ we have $\xi \star \eta = 0$ and, thus, 
$$
h(\xi, \eta) = h(\xi, \eta \star e) = h(\xi \star \eta, e) = 0.
$$
Finally, if either of $h', h''$ has kernel, then this is the kernel of the entire $h$. The proposition is proved.
\end{proof}

\end{document}